\newtheorem{theorem}{Theorem}
\newtheorem{lemma}[theorem]{Lemma}
\newtheorem{remark}{Remark}
\newtheorem{definition}{Definition}
\newcommand{\tabincell}[2]{\begin{tabular}{@{}#1@{}}#2\end{tabular}}
\newcommand{\sminus}{\ensuremath{\!-\!}}
\newcommand{\splus}{\ensuremath{\!+\!}}
\newcommand{\IPOSI}{\ensuremath{\mathbf{I}_\top}}
\newcommand{\INEGA}{\ensuremath{\mathbf{I}_\bot}}
\newcommand{\GPTT}{\ensuremath{\mathsf{GPTT}}}
\newcommand{\mypara}[1]{\vspace*{0.06in}\noindent\textbf{#1} \xspace}
\newcommand{\Lap}[1]{\ensuremath{\mathsf{Lap}\left(#1\right)}\xspace}
\renewcommand{\Pr}[1]{\ensuremath{\mathsf{Pr}\!\left[#1\right]}\xspace}
\newcommand{\myexp}[1]{\ensuremath{e^{#1}}\xspace}
\newcommand{\MM}{\ensuremath{\mathcal{M}}\xspace}
\renewcommand{\AA}{\ensuremath{\mathcal{A}}\xspace}
\newcommand{\nega}{\bot}
\newcommand{\posi}{\top}
\renewcommand{\vec}[1]{\mbox{\boldmath$#1$}}
\newenvironment{ecompact}{
	\vspace{-0.05in}
	\begin{enumerate}
		\setlength{\itemsep}{1pt}
		\setlength{\parindent}{0pt}
		\setlength{\parskip}{0pt}
		\setlength{\parsep}{0pt}
	}  {\end{enumerate}
	\vspace{-0.1in}
}
\DeclareMathAlphabet      {\mathbfit}{OML}{cmm}{b}{it}
\definecolor{forest}{RGB}{34, 139, 34}
\begin{document}
\sloppypar

\title{Understanding the Sparse Vector Technique for \\Differential Privacy}
\author{Min Lyu, Dong Su, Ninghui Li}

\author{%
	{Min Lyu$^{~\#}$, Dong Su$^{~\star}$, Ninghui Li$^{~\star}$}%
	\vspace{1.6mm}\\
	\fontsize{10}{10}\selectfont\itshape
	$^{\#}$\,
    University of Science and Technology of China \hspace{3cm}$^{\star}$\,Purdue University\\
	\fontsize{9}{9}\selectfont\ttfamily\upshape
	\hspace{2cm}lvmin05@ustc.edu.cn \hspace{3.5cm}\{su17, ninghui\}@cs.purdue.edu%
}


\maketitle

\newcommand{\update}[3]{#2\xspace}

\newcommand{\refertoappendix}[2]{#1}

\begin{abstract}
The Sparse Vector Technique (SVT) is a fundamental technique for satisfying differential privacy and  has the unique quality that one can output some query answers without apparently paying any privacy cost.  SVT has been used in both the interactive setting, where one tries to answer a sequence of queries that are not known ahead of the time, and in the non-interactive setting, where all queries are known.
Because of the potential savings on privacy budget, many variants for SVT have been proposed and employed in privacy-preserving data mining and publishing.  However, most variants of SVT are actually not private.  In this paper, we analyze these errors and identify the misunderstandings that likely contribute to them.  We also propose a new version of SVT that provides better utility, and introduce an effective technique to improve the performance of SVT.  These enhancements can be applied to improve utility in the interactive setting. 
Through both analytical and experimental comparisons, we show that, in the non-interactive setting (but not the interactive setting), the SVT technique is unnecessary, as it can be replaced by the Exponential Mechanism (EM) with better accuracy.

\end{abstract}

\pagestyle{plain}
\section{Introduction}
Differential privacy (DP) is increasingly being considered the privacy notion of choice for privacy-preserving data analysis and publishing in the research literature.
In this paper we study the Sparse Vector Technique (SVT), a basic technique for satisfying DP, which was first proposed by Dwork et al.~\cite{DNR+09} and later refined in~\cite{RR10} and~\cite{HR10}, and used in~\cite{Gupta2012,LC14,SCM14,CXZX15, SS15}.  Compared with other techniques for satisfying DP, SVT has the unique quality that one can output some query answers without apparently paying any privacy cost.
More specifically, in SVT one is given a sequence of queries and a certain threshold $T$, and outputs a vector indicating whether each query answer is above or below $T$; that is, the output is a vector $\{\nega,\posi\}^\ell$, where $\ell$ is the number of queries answered, $\posi$ indicates that the corresponding query answer is above the threshold\update{}{,}{When there is a list of things, I prefer the syntax: a, b, and c.  This is recommended, and I also feel that the comma before and makes the list clearer, especially when b includes the word and} and $\nega$ indicates below.  SVT works by first perturbing the threshold $T$ and then comparing each perturbed individual query answer against the noisy threshold.  When one expects that the predominant majority of queries are on one side, e.g., below the threshold, one can use SVT so that while each output of $\posi$ (which we call a \textbf{positive outcome}) consumes some privacy budget, each output of $\nega$ (\textbf{negative outcome}) consumes none.  That is, with a fixed privacy budget and a given level of noise added to each query answer, one can keep answering queries as long as the number of $\posi$'s does not exceed a pre-defined cutoff point.

This ability to avoid using any privacy budget for queries with negative outcomes is very powerful for \textbf{the interactive setting}, where one answers a sequence of queries without knowing ahead of the time what these queries are.  Some well-known lower-bound results~\cite{DN03,DMNS06,DMT07,DY08} suggest that ``one cannot answer a linear, in the database size, number of queries with small noise while preserving privacy''~\cite{DNR+09}.  This limitation can be bypassed using SVT, as in the iterative construction approach in~\cite{Gupta2012, HR10, RR10}.  In this approach, one maintains a history of past queries and answers.  For each new query, one first uses this history to derive an answer for the query, and then uses SVT to check whether the error of this derived answer is below a threshold.  If it is, then one can use this derived answer for this new query without consuming any privacy budget.  Only when the error of this derived answer is above the threshold, would one need to spend privacy budget accessing the database to answer the query.

With the power of SVT come  the subtlety of why it is private and the difficulty of applying it correctly.  The version of SVT used in~\cite{Gupta2012,HR10}, which was abstracted into a generic technique and described in Roth's 2011 lecture notes~\cite{Roth2011}, turned out to be  not differentially private as claimed.  This error in~\cite{Gupta2012,HR10} is arguably not critical because it is possible to use a fixed version of SVT without affecting the main asymptotic results.
Since 2014, several variants of SVT were developed; they were used for frequent itemset mining~\cite{LC14}, for feature selection in private classification~\cite{SCM14}, and for publishing high-dimensional data~\cite{CXZX15}.  These usages are in the \textbf{non-interactive setting}, where all the queries are known ahead of the time, and the goal is to find $c$ queries that have large answers, e.g., finding the $c$ most frequent itemsets. Unfortunately, these variants do not satisfy DP, as pointed out in~\cite{CM15}.  When using a correct version of SVT in these papers, one would get significantly worse accuracy.  Since these papers seek to improve the tradeoff between privacy and utility, the results in them are thus invalid.

The fact that many usages of SVT are not private, even when proofs of their privacy were given, is already known~\cite{CM15,PrivTree}; however, we feel that what led to the erroneous proofs were not clearly explained, and such an explanation can help researchers to avoid similar errors in the future.  One evidence of the continuing confusion over SVT appears in~\cite{CM15}, the first paper that identifies errors in some SVT variants.  In~\cite{CM15}, the SVT variants in \cite{LC14, SCM14, CXZX15} were modeled as a generalized private threshold testing algorithm (GPTT), and a proof showing that GPTT does not satisfy $\epsilon$-DP for any finite $\epsilon$ (which we use $\infty$-DP to denote in this paper) was given.  However, as we show in this paper, the proof in~\cite{CM15} was incorrect.  This error was not reported in the literature. 
One goal of this paper is to clearly explain why correct usages of SVT is private, and what are the most likely confusions that caused the myriad of incorrect usages of SVT. 


A second goal of this paper is to improve the accuracy of SVT.  A version of SVT with a correct privacy proof appeared in Dwork and Roth's 2014 book~\cite{DPBook}, and was used in some recent work, e.g., \cite{SS15}.
In this paper, we present a version of SVT that adds less noise for the same level of privacy.  In addition, we develop a novel technique that optimizes the privacy budget allocation between that for perturbing the threshold and that for perturbing the query answers, and experimentally demonstrate its effectiveness.

A third goal of this paper is to point out that usage of SVT can be replaced by the Exponential Mechanism (EM)~\cite{MT07} when used in the non-interactive setting.  Most recent usages of SVT in \cite{CXZX15, LC14, SS15, SCM14} are in the non-interactive setting, where the goal is to select up to $c$ queries with the highest answers.  In this setting, one could also use the Exponential Mechanism (EM)~\cite{MT07} $c$ times to achieve the same objective, each time selecting the query with the highest answer. 
Using analysis as well as experiments, we demonstrate that EM outperforms SVT.  




In summary, this paper has the following novel contributions.
\begin{enumerate}
 \item
We propose a new version of SVT that provides better utility.  We also introduce an effective technique to improve the performance of SVT.
These enhancements achieve better utility than previous SVT algorithms and can be applied to improve utility in the interactive setting.

 \item
While previous papers have pointed out most of the errors in usages of SVT, we use a detailed privacy proof of SVT to identify the misunderstandings that likely caused the different non-private versions.  We also point out a previously unknown error in the proof in~\cite{CM15} of the non-privacy of some SVT variants.


 \item
Through analysis and experiments on real datasets, we have evaluated the effects of various SVT optimizations and compared them to EM.
Our results show that for non-interactive settings, one should use EM instead of SVT.
\end{enumerate}

The rest of the paper is organized as follows. Section~\ref{sec:background} gives background information on DP.  We analyze six variants of SVT in Section~\ref{sec:usage}.  In Section~\ref{sec:non_numeric}, we present our optimizations of SVT.  We compare SVT with the exponential mechanism in Section ~\ref{sec:svtvsem}. The experimental results are shown in Section~\ref{sec:experiments}.  Related works are summarized in Section~\ref{sec:related}.   Section~\ref{sec:conclusion} concludes our work.

\section{Background}\label{sec:background}


\begin{definition}[{$\epsilon$-DP~\cite{Dwo06,DMNS06}}] \label{def:diff}
A randomized mechanism $\AA$ satisfies $\epsilon$-differential privacy ($\epsilon$-DP) if for any pair of neighboring datasets $D$ and $D^{\prime}$, and any $S\in \mathit{Range}(\AA)$,
$$\Pr{\AA(D)=S} \leq e^{\epsilon}\cdot \Pr{\AA(D^{\prime})=S}.$$
\end{definition}

Typically, two datasets $D$ and $D^{\prime}$ are considered to be neighbors when they differ by only one tuple.
We use $D\simeq D^{\prime}$ to denote this.

There are several primitives for satisfying $\epsilon$-DP.  The Laplacian mechanism~\cite{DMNS06} adds a random noise sampled from the Laplace distribution with the scale parameter proportional to $\Delta_f$, the \emph{global sensitivity} of the function $f$.  That is, to compute $f$ on a dataset $D$, one outputs

\vspace*{-0.2in}$$
\begin{array}{crl}
& \AA_f(D) & =f(D)+\mathsf{Lap}\left(\frac{\Delta_f}{\epsilon}\right),\\
\mbox{where} &  \Delta_f & = \max\limits_{D\simeq D^{\prime}} |f(D) - f(D^{\prime})|,
\\
\mbox{and}& \Pr{\Lap{\beta}=x} & = \frac{1}{2\beta} \myexp{-|x|/\beta}.
\end{array}
$$
In the above, $\Lap{\beta}$ denotes a random variable sampled from the Laplace distribution with scale parameter $\beta$.

The \emph{exponential mechanism}~\cite{MT07} samples the output of the data analysis mechanism according to an exponential distribution.  The mechanism relies on a \emph{quality} function $q: \mathcal{D} \times \mathcal{R} \rightarrow \mathbb{R}$ that assigns a real valued score to one output $r\in\mathcal{R}$ when the input dataset is $D$, where higher scores indicate more desirable outputs.  Given the quality function $q$, its global sensitivity $\Delta_q$ is defined as:
\begin{equation*}
\Delta_q = \max_{r} \max_{D\simeq D^{\prime}} |q(D,r) - q(D^{\prime},r)|.
\end{equation*}

\noindent Outputting $r$ using the following distribution satisfies $\epsilon$-DP:
\begin{equation*}
 \Pr{r\mbox{ is selected}} \propto \exp{\left(\frac{\epsilon }{2\,\Delta_q}q(D,r)\right)}.  \label{eq:exp}
\end{equation*}

In some cases, the changes of all quality values are one-directional.  For example, this is the case when the quality function counts the number of tuples that satisfy a certain condition, and two datasets are considered to be neighboring when one is resulted from adding or deleting a tuple from the other.  When adding one tuple, all quality values either stay unchanged or increase by one; the situation where one quality increases by 1 and another decreases by 1 cannot occur.
In this case, one can make more accurate selection by choosing each possible output with probability proportional to $\exp{\left(\frac{\epsilon }{\Delta_q}q(D,r)\right)}$, instead of $\exp{\left(\frac{\epsilon }{2\,\Delta_q}q(D,r)\right)}$.

DP is sequentially composable in the sense that combining multiple mechanisms $\AA_{1}, \cdots, \AA_{m}$ that satisfy DP for $\epsilon_1, \cdots,\epsilon_m$ results in a mechanism that satisfies $\epsilon$-DP for $\epsilon=\sum_{i} \epsilon_i$.
Because of this, we refer to $\epsilon$ as the privacy budget of a privacy-preserving data analysis task.
When a task involves multiple steps, each step uses a portion of $\epsilon$ so that the sum of these portions is no more than $\epsilon$.

\section{Variants of SVT}\label{sec:usage}

\begin{figure*}[!]
\begin{centering}
    \caption{A Selection of SVT Variants} \label{fig:algorithms}
\end{centering}
    \textbf{Input/Output shared by all SVT Algorithms}

    \textbf{Input:} A private database $D$, a stream of queries $Q=q_1,q_2,\cdots$ each with sensitivity no more than $\Delta$, either a sequence of thresholds $\mathbf{T}=T_1,T_2,\cdots$ or a single threshold $T$ (see footnote $^{*}$), and $c$, the maximum number of queries to be answered with $\posi$.

    \textbf{Output:} A stream of answers $a_1, a_2, \cdots$, where each $a_i \in \{\posi,\nega\}\cup \mathbb{R}$ and $\mathbb{R}$ denotes the set of all real numbers.

\begin{minipage}[t]{8.5cm}
    \vspace{0pt}
    \begin{algorithm}[H]
    \caption{An instantiation of the SVT proposed in this paper.}
    \label{alg:sparseour}
    \begin{algorithmic}[1]
        \Require $D,Q,\Delta,\mathbf{T}=T_1,T_2,\cdots,c$.
        \State $\epsilon_1=\epsilon/2,\;\;\rho = \Lap{\Delta/\epsilon_1}$  
        \State $\epsilon_2=\epsilon - \epsilon_1,$ $\;\;$count = 0
        \For {each query $q_i \in Q$}
        \State  $\nu_i = \Lap{2c\Delta/\epsilon_2}$
        \If{$q_i(D)+\nu_i \ge T_i+\rho$}
        \State Output $a_i = \posi$
        \State count = count + 1,   {\bf Abort} if count $\ge c$.
        \Else \State Output $a_i = \nega$
        \EndIf
        \EndFor
    \end{algorithmic}
    \end{algorithm}
\end{minipage} \;\;\;%
\begin{minipage}[t]{8.5cm}
    \vspace{0pt}
    \begin{algorithm}[H]
    \caption{SVT in Dwork and Roth 2014~\protect\cite{DPBook}. }
    \label{alg:sparsebook}
    \begin{algorithmic}[1]
        \Require $D,Q,\Delta,T,c$.
        \State $\epsilon_1=\epsilon/2,\;\;$ $\rho = \Lap{c\Delta/\epsilon_1}$  
        \State $\epsilon_2=\epsilon - \epsilon_1,\;\;$ count = 0
        \For {each query $q_i \in Q$}
        \State $\nu_i = \Lap{2c\Delta/\epsilon_1}$
        \If{$q_i(D)+\nu_i \ge T+\rho$}
        \State Output $a_i = \posi$, $\rho = \Lap{c\Delta/\epsilon_2}$
        \State count = count + 1, {\bf Abort} if count $\ge c$.
        \Else \State Output $a_i = \nega$
        \EndIf
        \EndFor
    \end{algorithmic}
    \end{algorithm}
\end{minipage}

\begin{minipage}[t]{8.5cm}
    \vspace{0pt}
    \begin{algorithm}[H]
    \caption{SVT in Roth's 2011 Lecture Notes~\protect\cite{Roth2011}.  }
    \label{alg:sparselecture}
    \begin{algorithmic}[1]
        \Require $D,Q,\Delta,T,c$.
        \State $\epsilon_1=\epsilon/2,\;\;\rho = \Lap{\Delta/\epsilon_1}$,
        \State $\epsilon_2=\epsilon - \epsilon_1,\;\;$  count = 0
        \For {each query $q_i \in Q$}
        \State $\nu_i = \Lap{c\Delta/\epsilon_2}$
        \If{$q_i(D)+\nu_i \ge T + \rho$}
        \State Output $a_i = q_i(D)+\nu_i$
        \State count = count + 1,  {\bf Abort} if count $\ge c$.
        \Else \State Output $a_i = \nega$
        \EndIf
        \EndFor
    \end{algorithmic}
    \end{algorithm}
\end{minipage} %
\begin{minipage}[t]{8.5cm}
    \vspace{0pt}
    \begin{algorithm}[H]
    \caption{SVT in Lee and Clifton 2014~\protect\cite{LC14}. }
    \label{alg:LC}
    \begin{algorithmic}[1]
        \Require $D,Q,\Delta,T,c$.
        \State $\epsilon_1=\epsilon/4,\;\;\rho = \Lap{\Delta/\epsilon_1}$
        \State  $\epsilon_2=\epsilon - \epsilon_1,\;\;$ count = 0
        \For {each query $q_i \in Q$}
        \State $\nu_i = \Lap{\Delta/\epsilon_2}$
        \If{$q_i(D)+\nu_i \ge T + \rho$}
        \State Output $a_i = \posi$
        \State count = count + 1, {\bf Abort} if count $\ge c$.
        \Else       \State Output $a_i = \nega$
        \EndIf
        \EndFor
    \end{algorithmic}
    \end{algorithm}
\end{minipage}

\begin{minipage}[t]{8.5cm}
    \vspace{0pt}
    \begin{algorithm}[H]
    \caption{SVT in Stoddard et al.~2014~\protect\cite{SCM14}. }
    \label{alg:SCM}
    \begin{algorithmic}[1]
        \Require $D,Q,\Delta,T$.
        \State $\epsilon_1=\epsilon/2,\;\;\rho = \Lap{\Delta/\epsilon_1}$ 
        \State  $\epsilon_2=\epsilon - \epsilon_1$
        \For {each query $q_i \in Q$}
        \State $\nu_i=0$
        \If{$q_i(D)+\nu_i\ge T + \rho$}
        \State Output $a_i = \posi$
        \State $\;$
        \Else        \State Output $a_i = \nega$
        \EndIf
        \EndFor
    \end{algorithmic}
    \end{algorithm}
\end{minipage}
\begin{minipage}[t]{8.5cm}
    \vspace{0pt}
    \begin{algorithm}[H]
    \caption{SVT in Chen et al.~2015~\protect\cite{CXZX15}.}
    \label{alg:CXZX}
    \begin{algorithmic}[1]
        \Require $D,Q,\Delta,\mathbf{T}=T_1,T_2,\cdots$.
        \State $\epsilon_1=\epsilon/2,\;\;\rho = \Lap{\Delta/\epsilon_1}$
        \State $\epsilon_2=\epsilon - \epsilon_1$
        \For {each query $q_i \in Q$}
        \State $\nu_i = \Lap{\Delta/\epsilon_2}$
        \If{$q_i(D)+\nu_i \ge T_i + \rho$}
        \State Output $a_i = \posi$
        \State $\;$
        \Else         \State Output $a_i = \nega$
        \EndIf
        \EndFor
    \end{algorithmic}
    \end{algorithm}
\end{minipage} %

    \begin{center}
        \begin{tabular}{| c | c | c | c | c | c |  c | }
            \hline
                & \textbf{Alg.~\ref{alg:sparseour}} & \textbf{Alg.~\ref{alg:sparsebook}} & \textbf{Alg.~\ref{alg:sparselecture}} & \textbf{Alg.~\ref{alg:LC}}& \textbf{Alg.~\ref{alg:SCM}} & \textbf{Alg.~\ref{alg:CXZX}} \\
            \hline
           $\epsilon_1$& $\epsilon/2$ & $\epsilon/2$ & $\epsilon/2$ & $\epsilon/4$ & $\epsilon/2$ & $\epsilon/2$\\
            \hline
            Scale of threshold noise $\rho$ & $\Delta/\epsilon_1$ & $c\Delta/\epsilon_1$ & $\Delta/\epsilon_1$ & $\Delta/\epsilon_1$ & $\Delta/\epsilon_1$ & $\Delta/\epsilon_1$  \\
             \hline
              Reset $\rho$ after each output of $\posi$ \textbf{(unnecessary)} &  & Yes &  &  &  &  \\
             \hline
            Scale of query noise $\nu_i$  & $2c\Delta/\epsilon_2$ & $2c\Delta/\epsilon_2$ & $c\Delta/\epsilon_1$ & $\Delta/\epsilon_2$ & 0 & $\Delta/\epsilon_2$ \\
            \hline
            Outputting $q_i+\nu_i$ instead of $\posi$ \textbf{(not private)} &  &  & Yes &  &  &  \\
             \hline
            Outputting unbounded $\posi$'s \textbf{(not private)} &  &   &  &  & Yes & Yes \\
            \hline
            \textbf{Privacy Property} & $\epsilon$-DP & $\epsilon$-DP & $\infty$-DP &$\left(\frac{1+6c}{4}\epsilon\right)$-DP  & $\infty$-DP & $\infty$-DP  \\
             \hline
        \end{tabular}
    \end{center}
    \vspace*{-0.1in}\caption{Differences among Algorithms~\ref{alg:sparseour}-\ref{alg:CXZX}. }
    \label{tab:differences of algs}

\vspace*{0.15in}\begin{minipage}{172mm}

$^{*}$  Algorithms~\ref{alg:sparseour} and \ref{alg:CXZX} use a sequence of thresholds $\mathbf{T}=T_1,T_2,\cdots$, allowing different thresholds for different queries.  The other algorithms use the same threshold $T$ for all queries.  We point out that this difference is mostly syntactical.  In fact, having an SVT where the threshold always equals $0$ suffices.  Given a sequence of queries $q_1,q_2,\cdots$, and a sequence of thresholds $\mathbf{T}=T_1,T_2,\cdots$, we can define a new sequence of queries $r_i=q_i-T_i$, and apply the SVT to $r_i$ using $0$ as the threshold to obtain the same result.  In this paper, we decide to use thresholds to be consistent with the existing papers.


\end{minipage}

\end{figure*}

In this section, we analyze variants of SVT; six of them are listed in Figure~\ref{fig:algorithms}.
Alg.~\ref{alg:sparseour} is an instantiation of our proposed SVT. Alg.~\ref{alg:sparsebook} is the version taken from \cite{DPBook}.
Alg.~\ref{alg:sparselecture}, \ref{alg:LC},  \ref{alg:SCM}, and \ref{alg:CXZX} are taken from
 \cite{Roth2011,LC14,SCM14,CXZX15}
respectively.

The table in Figure \ref{tab:differences of algs} summarizes the differences among these algorithms.  Their privacy properties are given in the last row of the table.  Alg.~\ref{alg:sparseour} and~\ref{alg:sparsebook} satisfy $\epsilon$-DP, and the rest of them do not.  Alg.~\ref{alg:sparselecture}, \ref{alg:SCM}, \ref{alg:CXZX} do not satisfy $\epsilon$-DP for any finite $\epsilon$, which we denote as $\infty$-DP.

An important input parameter to any SVT algorithm is the number $c$, i.e., how many positive outcomes one can answer before stopping.  This number can be quite large.  For example, in privately finding top-$c$ frequent itemsets~\cite{LC14}, $c$ ranges from 50 to 400.  In using selective stochastic gradient descent to train deep learning model privately~\cite{SS15}, the number of gradients to upload at each epoch ranges from 15 to 140,106.  

To understand the differences between these variants, one can view  SVT as having the following four steps steps: 
\begin{ecompact}
 \item
Generate the threshold noise $\rho$ (Line 1 in each algorithm),  which will be added to the threshold during comparison between each query and the threshold (line 5).  In all except Alg.~\ref{alg:sparsebook}, $\rho$ scales with $\Delta/\epsilon_1$.  In Alg.~\ref{alg:sparsebook}, however, $\rho$ scales with $c\Delta/\epsilon_1$.  This extra factor of $c$ in the noise scale causes Alg.~\ref{alg:sparsebook} to be much less accurate than Alg.~\ref{alg:sparseour}.  We show that including the factor of $c$ is an effect of  Alg.~\ref{alg:sparsebook}'s design to resample $\rho$ each time a query results in a positive outcome (Line 6).  When keeping $\rho$ unchanged, $\rho$ does not need to scale with $c$ to achieve privacy.
 \item
For each query $q_i$, generate noise $\nu_i$ to be added to the query (Line 4), which should scale with $2c\Delta/\epsilon_2$.  In Alg.~\ref{alg:LC} and \ref{alg:CXZX}, $\nu_i$ scales with $\Delta/\epsilon_2$.  Removing the factor of $c$ from the magnitude of the noise will result in better utility; however, this is done at the cost of being non-private.   Alg.~\ref{alg:SCM} adds no noise to $q_i$ at all, and is also non-private.
 \item
Compare the perturbed query answer with the noisy threshold and output whether it is above or below the threshold (Lines 5, 6, 9).  Here Alg.~\ref{alg:sparseour} differs in that it outputs the noisy query answer $q_i(D)+\nu_i$, instead of an indicator $\posi$.  This makes it non-private.
 \item
Keep track of the number of $\posi$'s in the output, and stop when one has outputted $c$ $\posi$'s (Line 7).  This step is missed in Alg.~\ref{alg:SCM} and \ref{alg:CXZX}.  Without this limitation, one can answer as many queries as there are with a fixed accuracy level for each query.  If this was to be private, then one obtains privacy kind of ``for free''.
\end{ecompact}

\subsection{Privacy Proof for Alg.~\ref{alg:sparseour}}

We now prove the privacy of Alg.~\ref{alg:sparseour}. We break down the proof into two steps, to make the proof easier to understand, and, more importantly, to enable us to point out what confusions likely cause the different non-private variants of SVT to be proposed.  In the first step, we analyze the situation where the output is $\nega^\ell$, a length-$\ell$ vector $\langle\nega,\cdots,\nega\rangle$, indicating that all $\ell$ queries are tested to be below the threshold.

\begin{lemma}\label{lemma:negative}
Let $\AA$ be Alg.~\ref{alg:sparseour}.  For any neighboring datasets $D$ and $D^{\prime}$, and any integer $\ell$, we have
\begin{equation*}
 \Pr{\AA(D)=\nega^\ell} \leq e^{\epsilon_1} \Pr{\AA(D^{\prime})=\nega^\ell}. \label{eq:nega}
\end{equation*}
\end{lemma}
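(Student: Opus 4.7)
The plan is to condition on the threshold noise $\rho$ and then exploit a standard shift argument on $\rho$. Write $f_\rho$ for the density of $\rho \sim \Lap{\Delta/\epsilon_1}$, and for a dataset $E$ and a real $z$ let
\[
g(z, E) \;=\; \Pr\!\left[\AA(E)=\nega^\ell \,\big|\, \rho = z\right],
\]
where the probability is taken over the independent draws of the query noises $\nu_1,\dots,\nu_\ell$. Then $\Pr{\AA(E)=\nega^\ell} = \int f_\rho(z)\, g(z,E)\, dz$, and the goal is to bound this integral for $D$ in terms of the one for $D'$.

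The key step is a pointwise comparison $g(z, D) \le g(z+\Delta, D')$ for every $z$. The reason is that, because the event $\AA(E)=\nega^\ell$ conditioned on $\rho = z$ is simply the intersection of the independent events $q_i(E)+\nu_i < T_i + z$, one may treat each query separately. For any fixed $\nu_i$, if $q_i(D)+\nu_i < T_i + z$, then the sensitivity bound $q_i(D') \le q_i(D) + \Delta$ gives $q_i(D')+\nu_i < T_i + z + \Delta$. Taking probabilities over $\nu_i$ and multiplying across the $\ell$ independent coordinates yields the desired inequality. Crucially, no privacy budget has to be spent on the $\nu_i$'s here, because the comparison uses the same $\nu_i$ on both sides; the cost is absorbed entirely into the shift of $\rho$.

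Combining these observations, the plan is to write
\[
\Pr{\AA(D)=\nega^\ell} \;\le\; \int f_\rho(z)\, g(z+\Delta, D')\, dz \;=\; \int f_\rho(z' - \Delta)\, g(z', D')\, dz',
\]
using the change of variable $z' = z + \Delta$. Then I apply the standard Laplace-density ratio: because $\rho \sim \Lap{\Delta/\epsilon_1}$, the triangle inequality gives $\bigl| |z'-\Delta| - |z'| \bigr| \le \Delta$, so $f_\rho(z'-\Delta) \le e^{\epsilon_1}\, f_\rho(z')$ pointwise. Substituting this bound yields $\Pr{\AA(D)=\nega^\ell} \le e^{\epsilon_1}\Pr{\AA(D')=\nega^\ell}$.

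The main conceptual obstacle is seeing clearly why the $\nu_i$'s contribute nothing to the bound in this ``all negative'' case: one might mistakenly think each query inspected by the algorithm must cost privacy. The shift-coupling makes it explicit that the entire cost is paid by the one-time perturbation of the threshold, and that this cost is independent of $\ell$. This is also what motivates the specific noise scale $\Delta/\epsilon_1$ for $\rho$ (rather than the inflated $c\Delta/\epsilon_1$ used in Alg.~\ref{alg:sparsebook}), and it foreshadows the second step of the overall proof, where the $c$ factor reappears only because up to $c$ positive outcomes are permitted.
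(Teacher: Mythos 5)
Your proof is correct and follows essentially the same route as the paper's: condition on the threshold noise $\rho$, establish the pointwise bound $g(z,D)\le g(z+\Delta,D')$ via the sensitivity of each query, then pay the $e^{\epsilon_1}$ cost with a single Laplace-density shift of $\rho$ by $\Delta$. The only cosmetic difference is that you perform the change of variable $z'=z+\Delta$ before invoking the Laplace ratio, whereas the paper bounds $\Pr{\rho=z}\le e^{\epsilon_1}\Pr{\rho=z+\Delta}$ first and then substitutes; the two orderings are interchangeable.
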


\begin{proof}
We have
{\small
\begin{align}
\nonumber \Pr{\AA(D) =\nega^\ell} &=  \int_{-\infty}^{\infty} \Pr{\rho=z} f_{D}(z)\: dz,
\\
\mbox{where }\;f_{D}(z) & = \Pr{\AA(D) =\nega^\ell \mid \rho=z}  \label{eq:con1}
\\
 & =  \prod\limits_{i \in \{1, 2, \cdots, \ell\}}\Pr{ q_i(D) + \nu_i < T_i + z}. \label{eq:con2}
\end{align}
}

\vspace*{-0.1in}\noindent
The probability of outputting $\nega^\ell$ over $D$ is the summation (or integral) of the product of $\Pr{\rho=z}$, the probability that the threshold noise equals $z$, and $f_{D}(z)$, the conditional probability
that $\nega^\ell$ is the output on $D$ given that the threshold noise $\rho$ is $z$.
The step from (\ref{eq:con1}) to (\ref{eq:con2}) is because, given $D$, $\mathbf{T}$, the queries, and $\rho$, whether one query results in $\nega$ or not depends completely on the noise $\nu_i$ and is independent from whether any other query results in $\nega$.

The key observation underlying the SVT technique is that for any neighboring $D,D'$, we have $f_{D}(z) \leq f_{D'}(z+\Delta)$.  Suppose that we have $q_i(D)=q_i(D')-\Delta$ for each $q_i$, then the ratio $f_{D}(z)/f_{D'}(z)$ is unbounded when $|L|$ is unbounded.  However, $f_{D}(z)$ is upper-bounded by the case where the dataset is $D'$ but the noisy threshold  \textbf{is increased by} $\Delta$, because for any query $q_i$, $|q_i(D)-q_i(D')|\leq \Delta$.  More precisely, we have
{\small
\begin{align}
\nonumber
 \Pr{ q_i(D) + \nu_i < T_i+z} & = \Pr{ \nu_i < T_i - q_i(D) + z} \\
 \nonumber
 & \leq \Pr{ \nu_i < T_i + \Delta - q_i(D^{\prime}) + z} \\
 & = \Pr{ q_i(D^{\prime}) + \nu_i < T_i + (z + \Delta)}.  \label{eq:prneg}
\end{align}
}
Because $\rho=\Lap{\Delta/\epsilon_1}$, by the property of the Laplace distribution,  we have:
\begin{align}\small
\forall z,\;\Pr{\rho=z} & \leq e^{\epsilon_1}\,\Pr{\rho=z+\Delta}, \mbox{ and thus}  \nonumber
\end{align}
{\small
\begin{align*}
\Pr{\AA(D) =\nega^\ell} &=  \int_{-\infty}^{\infty}  \Pr{\rho=z} \; f_{D}(z)\: dz \\
 & \leq \int_{-\infty}^{\infty} e^{\epsilon_1} \Pr{\rho=z+\Delta} \; f_{D'}(z+\Delta)\: dz  \\ 
 & =  e^{\epsilon_1} \int_{-\infty}^{\infty} \Pr{\rho=z'} f_{D'} (z^{\prime})\: dz^{\prime}    \qquad  {\rm let}\: z^{\prime}=z+\Delta\\
 & =   e^{\epsilon_1} \Pr{\AA(D^{\prime}) =\nega^\ell}.
\end{align*}
}
This proves the lemma.
\end{proof}

We can obtain a similar result when the output is $\posi^\ell$ instead of $\nega^\ell$, i.e., $\Pr{\AA(D)=\posi^\ell}\,  \leq\, e^{\epsilon_1} \Pr{\AA(D^{\prime})=\posi^\ell}$,
because $\Pr{\rho=z} \leq e^{\epsilon_1}\,\Pr{\rho=z-\Delta}$ and  $g_{D}(z) \leq g_{D'}(z-\Delta)$, where
\begin{align}
 g_{D}(z)\: & =\:  \prod_{ i}\Pr{ q_i(D) + \nu_i \ge T_i + z}.  \label{eq:posi}
\end{align}

The fact that this bounding technique works both for positive outputs and negative outputs likely contributes to the misunderstandings behind Alg.~\ref{alg:SCM} and~\ref{alg:CXZX}, which treat positive and negative outputs exactly the same way.  The error is that when the output consists of both $\nega$ and $\posi$, one has to choose one side (either positive or negative) to be bounded by the above technique, and cannot do both at the same time.

We also observe that the proof of Lemma~\ref{lemma:negative} will go through if no noise is added to the query answers, i.e., $\nu_i=0$, because   Eq~(\ref{eq:prneg}) holds even when $\nu_i=0$.  It is likely because of this observation that Alg.~\ref{alg:SCM} adds no noise to query answers.
However, when considering  outcomes that include both positive answers ($\posi$'s) and negative answers ($\nega$'s), one has to add noises to the query answers, as we show below.


\begin{theorem}
\label{thm:algsparseour}
Alg.~\ref{alg:sparseour} is $\epsilon$-DP.
\end{theorem}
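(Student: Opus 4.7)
The plan is to extend Lemma~\ref{lemma:negative} to arbitrary output vectors. Fix neighboring $D,D'$ and any realizable output $\mathbf{a}\in\{\nega,\posi\}^\ell$, and let $L_\nega=\{i:a_i=\nega\}$ and $L_\posi=\{i:a_i=\posi\}$; by the abort condition we may assume $|L_\posi|\le c$. As in the proof of Lemma~\ref{lemma:negative}, I will write
$\Pr{\AA(D)=\mathbf{a}}=\int_{-\infty}^{\infty}\Pr{\rho=z}\,h_D(z)\,dz$,
where $h_D(z)=\prod_{i\in L_\nega}\Pr{q_i(D)+\nu_i<T_i+z}\prod_{i\in L_\posi}\Pr{q_i(D)+\nu_i\ge T_i+z}$, using the same conditional-independence factorization (once $\rho=z$ is fixed, each comparison depends only on its own fresh $\nu_i$).

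Next I perform a single change of variable $z\mapsto z+\Delta$ on the threshold noise. This shift contributes the factor $e^{\epsilon_1}$, since $\Pr{\rho=z}\le e^{\epsilon_1}\Pr{\rho=z+\Delta}$ by the scale $\Delta/\epsilon_1$ of $\rho$. After this shift, each negative factor is immediately controlled exactly as in~(\ref{eq:prneg}): $\Pr{q_i(D)+\nu_i<T_i+z}\le \Pr{q_i(D')+\nu_i<T_i+(z+\Delta)}$, at no additional multiplicative cost. Thus the whole $L_\nega$ block is absorbed by the single $e^{\epsilon_1}$ that paid for the threshold shift, which is exactly the insight behind Lemma~\ref{lemma:negative}.

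The subtlety, which I expect to be the crux of the argument, is what happens to the positive factors under the same shift. For $i\in L_\posi$, I need to bound $\Pr{q_i(D)+\nu_i\ge T_i+z}$ in terms of $\Pr{q_i(D')+\nu_i\ge T_i+(z+\Delta)}$, and now the threshold has moved in the wrong direction: the event on the right is harder, not easier. Writing $s_1=T_i+z-q_i(D)$ and $s_2=T_i+z+\Delta-q_i(D')$, we have $s_2-s_1=\Delta+q_i(D)-q_i(D')\in[0,2\Delta]$. A standard Laplace tail comparison (shift the whole distribution of $\nu_i$ by $s_2-s_1$) yields $\Pr{\nu_i\ge s_1}\le e^{(s_2-s_1)/\beta}\Pr{\nu_i\ge s_2}$ with $\beta=2c\Delta/\epsilon_2$, so each positive factor costs at most $e^{2\Delta/\beta}=e^{\epsilon_2/c}$. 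This is precisely why the query noise is scaled by $2c\Delta/\epsilon_2$: the factor of $2$ absorbs the worst-case $2\Delta$ gap between $s_1$ and $s_2$, and the factor of $c$ is what allows the cost to be paid $c$ times.

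Finally I multiply the bounds. Combining the $|L_\posi|\le c$ positive factors gives at most $e^{c\cdot\epsilon_2/c}=e^{\epsilon_2}$, while the threshold shift contributed $e^{\epsilon_1}$, so
\[
\Pr{\AA(D)=\mathbf{a}}\;\le\;e^{\epsilon_1}\cdot e^{\epsilon_2}\,\Pr{\AA(D')=\mathbf{a}}\;=\;e^{\epsilon}\,\Pr{\AA(D')=\mathbf{a}},
\]
after renaming the dummy variable $z'=z+\Delta$ in the integral. Since every output in $\mathit{Range}(\AA)$ has the form $\mathbf{a}\in\{\nega,\posi\}^\ell$ with $|L_\posi|\le c$, summing (or, for $\ell=\infty$, taking the appropriate limit over the prefix truncated at the $c$-th $\posi$) over any measurable set $S$ preserves the inequality, establishing $\epsilon$-DP. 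The main obstacle, as noted, is reconciling the two conflicting directions of shift; the resolution is that only one direction of threshold shift is used, and the mismatch for positives is paid for by the extra factor of $2c$ in the scale of $\nu_i$, which is exactly where Algorithms~\ref{alg:LC} and~\ref{alg:CXZX} go wrong.
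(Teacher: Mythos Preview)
Your proof is correct and follows essentially the same approach as the paper's own argument: decompose $\Pr{\AA(D)=\mathbf{a}}$ as $\int \Pr{\rho=z}\,f_D(z)\,g_D(z)\,dz$, use a single shift $z\mapsto z+\Delta$ to handle the threshold and all negative factors at cost $e^{\epsilon_1}$, and then pay $e^{\epsilon_2/c}$ per positive factor (at most $c$ of them) via the Laplace tail bound with scale $2c\Delta/\epsilon_2$ to absorb the worst-case $2\Delta$ gap. Your explicit formulation with $s_1,s_2$ and the observation $s_2-s_1\in[0,2\Delta]$ is exactly the content of the paper's inequalities~(\ref{eq:rdelta})--(\ref{eq:atmostc}).
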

\begin{proof}
Consider any output vector $\vec{a} \in \{\nega, \posi\}^\ell$.  Let $\vec{a}=\langle a_1,\cdots,a_\ell\rangle$, $\IPOSI=\{i: a_i =\posi\}$, and $\INEGA=\{i: a_i = \nega \}$.    Clearly, $|\IPOSI|\le c$.  We have
{\small
\begin{align}
\Pr{\AA(D) =\vec{a}} & =  \int_{-\infty}^{\infty} \Pr{\rho\!=\!z}\;\; f_{D}(z)\;\; g_{D}(z) \: dz, \label{eq:prgen}
 \\
 \;\mbox{where}\;\;\; f_{D}(z) & = \prod\limits_{i \in \INEGA } \Pr{q_i(D)\splus\nu_i\!<\! T_i \splus z} \nonumber
 \\
 \;\mbox{and}\;\;\; g_{D}(z) & = \prod\limits_{i \in \IPOSI } \Pr{q_i(D)\splus \nu_i\!\ge\! T_i \splus z}. \nonumber
\end{align}
}
\noindent The following, together with $\epsilon=\epsilon_1+\epsilon_2$, prove this theorem:
\begin{align}
\Pr{\rho\!=\!z} & \leq e^{\epsilon_1} \Pr{\rho\!=\!z + \Delta}  \nonumber
 \\
  f_{D}(z) & \leq f_{D'}(z + \Delta)  \label{eq:fprop}
 \\
  g_{D}(z) & \leq e^{\epsilon_2} g_{D'}(z + \Delta).  \label{eq:gprop}
\end{align}
Eq.~(\ref{eq:fprop}) deals with all the negative outcomes.  Eq.~(\ref{eq:gprop}), which deals with positive outcomes, is ensured by several factors.  At most $c$ positive outcomes can occur, $|q_i(D)-q_i(D^{\prime})|\leq \Delta$, and the threshold for $D'$ is just $\Delta$ higher that for $D$; thus adding noise $\nu_i=\Lap{2c\Delta/\epsilon_2}$ to each query ensures the desired bound.  More precisely,
\vspace*{-0.06in}{\small
\begin{align}
g_{D}(z)
  & = \prod\limits_{i \in \IPOSI } \Pr{ \nu_i  \ge  T_i \splus z \sminus q_i(D) } \nonumber \\
 & \leq \prod\limits_{i \in \IPOSI } \Pr{ \nu_i \ge  T_i \splus z - \Delta - q_i(D^{\prime})} \label{eq:rdelta}\\
 & \leq \prod\limits_{i \in \IPOSI } e^{\epsilon_2/c} \Pr{ \nu_i \ge  T_i\splus z \sminus \Delta \sminus q_i(D^{\prime}) \splus 2\Delta} \label{eq:lapshift} \\
 & \le e^{\epsilon_2} \prod\limits_{i \in \IPOSI } \Pr{q_i(D^{\prime})+\nu_i\ge T_i+z+\Delta} \label{eq:atmostc}
 \\
 &  = e^{\epsilon_2} g_{D'}(z+\Delta)\nonumber.
\end{align}
}
Eq.~(\ref{eq:rdelta}) is because {\small $ \sminus q_i(D) \geq  \sminus \Delta \sminus q_i(D^{\prime})$}, Eq.~(\ref{eq:lapshift}) is from the Laplace distribution's property, and
Eq.~(\ref{eq:atmostc}) is because there are at most $c$ positive outcomes, i.e., $|\IPOSI|\leq c$.
\end{proof}

We observe that while  $g_{D}(z) \leq g_{D'}(z - \Delta)$ is true, replacing (\ref{eq:gprop}) with it does not help us prove anything, because (\ref{eq:fprop}) uses $(z + \Delta)$ and (\ref{eq:gprop}) uses $(z - \Delta)$, and we cannot change the integration variable in a consistent way.

\subsection{Privacy Properties of Other Variants}

\vspace*{0.05in}\noindent\textbf{Alg.~\ref{alg:sparsebook}} is taken from the differential privacy book published in 2014~\cite{DPBook}.  It satisfies $\epsilon$-DP.  It has two differences when compared with Alg.~\ref{alg:sparseour}.  First, $\rho$ follows $\Lap{c\Delta/\epsilon_1}$ instead of $\Lap{\Delta/\epsilon_1}$.  This causes Alg.~\ref{alg:sparsebook} to have significantly worse performance than Alg.~\ref{alg:sparseour}, as we show in Section~\ref{sec:experiments}.
Second, Alg.~\ref{alg:sparsebook} refreshes the noisy threshold $T$ after each output of $\posi$.  We note that making the threshold noise scale with $c$ is necessary for privacy \emph{only if} one refreshes the threshold noise after each output of $\posi$; however, such refreshing is unnecessary.

\vspace*{0.05in}\noindent\textbf{Alg.~\ref{alg:sparselecture}} is taken from~\cite{Roth2011}, which in turn was abstracted from the algorithms used in~\cite{Gupta2012,HR10}.  It has two differences from Alg.~\ref{alg:sparseour}.  First, $\nu_i$ follows $\Lap{c\Delta/\epsilon_2}$ instead of $\Lap{2c\Delta/\epsilon_1}$; this is not enough for $\epsilon$-DP (even though it suffices for $\frac{3\epsilon}{2}$-DP).   Second, it actually outputs the noisy query answer instead of  $\posi$ for a query  above the threshold.
%
This latter fact causes Alg.~\ref{alg:sparselecture} to be not $\epsilon^{\prime}$-DP for any finite $\epsilon^{\prime}$.  A proof for this appeared in Appendix~A of~\cite{PrivTree};
\refertoappendix{we omit it here because of space limitation. }{for completeness, see Appendix~\cite{app:proofs} for the proof.}
The error in the proof for Alg.~\ref{alg:sparselecture}'s privacy in \cite{Roth2011} occurs in the following steps:
{\scriptsize
\begin{align}
 \nonumber
 & \Pr{\AA(D)=\vec{a}}
\\
 =& \int_{-\infty}^{\infty} \!\Pr{\rho\!=\!z} f_{D}(z) \prod\limits_{i \in \IPOSI} \Pr{q_i(D)\splus \nu_i\!\ge\! T\splus z  \wedge q_i(D)\splus \nu_i \!=\!a_i}  \: dz \nonumber \\
 = &  \int_{-\infty}^{\infty} \! \Pr{\rho\!=\!z} f_{D}(z) \prod\limits_{i \in \IPOSI} \Pr{q_i(D)\splus \nu_i =a_i}  \: dz  \label{eq:Rothproof2}
 \\
 \le & \int_{-\infty}^\infty  e^{\epsilon_1}  \Pr{\rho\!=\!z+\Delta} f_{D'}(z+\Delta)\: dz \prod\limits_{i \in \IPOSI} e^{\epsilon_2/c}\Pr{q_i(D^{\prime})+\nu_i =a_i}   \nonumber 
\end{align}
}
The error occurs when going to (\ref{eq:Rothproof2}), which is implicitly done in~\cite{Roth2011}.
This step removes the condition $q_i(D)\splus\nu_i\ge T\splus z$.

Another way to look at this error is that outputting the positive query answers reveals information about the noisy threshold, since the noisy threshold must be below the outputted query answer.  Once information about the noisy threshold is leaked, the ability to answer each negative query ``for free'' disappears.

\vspace*{0.05in}\noindent\textbf{Alg.~\ref{alg:LC}}, taken from~\cite{LC14}, differs from Alg.~\ref{alg:sparseour} in the following ways.  First, it sets $\epsilon_1$ to be $\epsilon/4$ instead of $\epsilon/2$.  This has no impact on the privacy.  Second, $\nu_i$ does not scale with $c$.
As a result, Alg.~\ref{alg:LC}  is only $\left(\frac{1+6c}{4}\right)\epsilon$-DP in general.
In~\cite{LC14}, Alg.~\ref{alg:LC} is applied for finding frequent itemsets, where the queries are counting queries and are monotonic.  Because of this monotonicity, the usage of  Alg.~\ref{alg:LC}  here is $\left(\frac{1+3c}{4}\right)\epsilon$-DP. Theorem \ref{standardthm} can be applied to Alg.~\ref{alg:LC} to establish this privacy property; we thus omit the proof of this.

\vspace*{0.05in}\noindent\textbf{Alg.~\ref{alg:CXZX}}, taken from~\cite{CXZX15}, was motivated by the observation that the proof in~\cite{LC14} can go through without stopping after encountering $c$ positive outcomes, and removed this limitation.

\vspace*{0.05in}\noindent\textbf{Alg~\ref{alg:SCM}}, taken from~\cite{SCM14}, further used the observation that the derivation of Lemma~\ref{lemma:negative} does not depend on the addition of noises, and removed that part as well.
The proofs for Alg.~\ref{alg:LC}, \ref{alg:SCM}, \ref{alg:CXZX} in~\cite{LC14, SCM14,CXZX15}
roughly use the logic below.
{\scriptsize
\begin{align*}
\int_{-\infty}^{\infty}\! \Pr{\rho\!=\!z} f_{D}(z) g_{D}(z)  dz
  & \leq e^{\epsilon} \int_{-\infty}^{\infty}\! \Pr{\rho\!=\!z} f_{D'}(z) g_{D'}(z)  dz
\\
\mbox{because } \int_{-\infty}^{\infty} \Pr{\rho\!=\!z} f_{D}(z)\: dz & \leq e^{\epsilon/2} \int_{-\infty}^{\infty} \Pr{\rho\!=\!z}\ f_{D'}(z)\: dz
\\
\mbox{and } \int_{-\infty}^{\infty} \Pr{\rho\!=\!z} g_{D}(z) \: dz  &\leq e^{\epsilon/2} \int_{-\infty}^{\infty} \Pr{\rho\!=\!z} g_{D'}(z),
\end{align*}
}

This logic incorrectly assumes the following is true:
{\small
$$\int_{-\infty}^{\infty}\! p(z) f(z) g(z)  dz = \int_{-\infty}^{\infty}\! p(z) f(z)  dz\: \int_{-\infty}^{\infty}\! p(z) g(z)  dz$$
}

A proof that Alg.~\ref{alg:CXZX} does not satisfy $\epsilon$-DP for any finite $\epsilon$ is given in Appendix~B of \cite{PrivTree}.
While these proofs also apply to Alg.~\ref{alg:SCM}, we give a much simpler proof of this below.

\begin{theorem}
\label{prop:Alg5infinite}
Alg.~\ref{alg:SCM} is not $\epsilon^{\prime}$-DP for any finite $\epsilon^{\prime}$.
\end{theorem}
\begin{proof}
Consider a simple example, with $T=0$, $\Delta=1$, $\mathbf{q}=\langle q_1,q_2\rangle$ such that $\mathbf{q}(D)=\langle 0, 1\rangle$ and $\mathbf{q}(D^{\prime})=\langle 1, 0 \rangle$, and $a=\langle\bot,\top\rangle$. Then by Eq (\ref{eq:prgen}),  we have
{\small
\begin{align*}
& \Pr{\AA(D)=a}= \int_{-\infty}^{\infty} \Pr{\rho=z}   \Pr{0< z} \Pr{1 \ge  z}\: dz \\
&= \int_{0}^{1} \Pr{\rho=z} \: dz > 0,
\end{align*}
}
which is nonzero; and
{\small
\begin{align*}
\Pr{\AA(D^{\prime})=a}= \int_{-\infty}^{\infty} \Pr{\rho=z^{\prime}}   \Pr{1< z^{\prime}} \Pr{0 \ge  z^{\prime}}\: dz^{\prime},
\end{align*}
}
which is zero. So the probability ratio $\frac{\Pr{\AA(D)=a}}{\Pr{\AA(D^{\prime})=a}}=\infty$.
\end{proof}

\subsection{Error in Privacy Analysis of GPTT}\label{sec:gptt}

In~\cite{CM15}, the SVT variants in \cite{LC14, SCM14, CXZX15} were modeled as a generalized private threshold testing algorithm (GPTT).  In GPTT, the threshold $T$ is perturbed using $\rho=\Lap{\Delta/\epsilon_1}$ and each query answer is perturbed using $\Lap{\Delta/\epsilon_2}$ and there is no cutoff; thus GPTT can be viewed as a generalization of Algorithm \ref{alg:CXZX}.  When setting $\epsilon_1=\epsilon_2=\frac{\epsilon}{2}$, GPTT becomes   Alg.~\ref{alg:CXZX}.

There is a constructive proof in~\cite{CM15} to show that GPTT is not $\epsilon'$-DP for any finite $\epsilon'$.  However, this proof is incorrect.  This error is quite subtle.  We discovered the error only after observing that the technique of the proof can be applied to show that Alg.~\ref{alg:sparseour} (which we have proved to be private) to be non-private.  The detailed discussion of this error is quite technical, and is included in Appendix~\ref{app:gptt}.

\subsection{Other Variants} \label{sec:other}

Some usages of SVT aim at satisfying $(\epsilon,\delta)$-DP~\cite{DMNS06}, instead of $\epsilon$-DP. These often exploit the advanced composition theorem for DP~\cite{DRV10}, which states that applying $k$ instances of $\epsilon$-DP algorithms satisfies $(\epsilon^{\prime}, \delta^{\prime})$-DP, where $\epsilon^{\prime} = \sqrt{2k\ln(1/\delta^{\prime})}\epsilon + k\epsilon (e^{\epsilon} - 1)$.  In this paper, we limit our attention to SVT variants to those satisfying $\epsilon$-DP, which are what have been used in the data mining community~\cite{CXZX15, LC14, SS15, SCM14}.

The SVT used in~\cite{HR10,RR10} has another difference from Alg.~\ref{alg:sparselecture}.  In~\cite{HR10,RR10}, the goal of using SVT is to determine whether the error of using an answer derived from past queries/answers is below a threshold.
This check takes the form of ``$\mathbf{if }\; |\tilde{q_i}- q_i(D)+\nu_i| \ge T + \rho \;\mathbf{ then}\; \mbox{output } i$,'' where $\tilde{q_i}$ gives the estimated answer of a query obtained using past queries/answers, and $q_i(D)$ gives the true answer.  This is incorrect because the noise $\nu_i$ should be outside the absolute value sign.  In the usage in~\cite{HR10,RR10}, the left hand of the comparison is always $\geq 0$; thus
whenever the output includes at least one $\posi$, one immediately knows that the threshold noise $\rho \!\geq\! -T$.  This leakage of $\rho$ is somewhat similar to  Alg.~\ref{alg:sparselecture}'s leakage caused by outputting noisy query answers that are found to be above the noisy threshold.  This problem can be fixed by using ``$\mathbf{if }\; |\tilde{q_i}-q_i(D)|+\nu_i \ge T + \rho \;\mathbf{ then}\; \mbox{output } i$'' instead.  By viewing $r_i=|\tilde{q_i}-q_i(D)|$ as the query to be answered; this becomes a standard application of SVT.

\section{Optimizing SVT}\label{sec:non_numeric}


Alg.~\ref{alg:sparseour} can be viewed as allocating half of the privacy budget for perturbing the threshold and half for perturbing the query answers.  This allocation is somewhat arbitrary, and other allocations are possible.  Indeed, Alg.~\ref{alg:LC} uses a ratio of $1:3$ instead of $1:1$.  In this section, we study how to improve SVT by optimizing this allocation ratio and by introducing other techniques.

\subsection{A Generalized SVT Algorithm}

We present a generalized SVT algorithm in Alg.~\ref{alg:standard}, which uses $\epsilon_1$ to perturb the threshold and $\epsilon_2$ to perturb the query answers. Furthermore, to accommodate the situations where one wants the noisy counts for positive queries, we also use $\epsilon_3$ to output query answers using the Laplace mechanism.

\begin{algorithm}
    \caption{Our Proposed Standard SVT}
    \label{alg:standard}
    \begin{algorithmic}[1]
        \Require $D,Q,\Delta,\mathbf{T}=T_1,T_2,\cdots,c$ and $\epsilon_1, \epsilon_2$ and $\epsilon_3$.
        \Ensure A stream of answers $a_1, a_2, \cdots$
        \State $\rho = \Lap{\frac{\Delta}{\epsilon_1}}$, count = 0
        \For {Each query $q_i \in Q$}
        \State $\nu_i = \Lap{\frac{2c\Delta}{ \epsilon_2}}$
        \If{$q_i(D)+\nu_i \ge T_{i} + \rho$}
            \If{$\epsilon_3 >0$}
                \State \textbf{Output} $a_i = q_i(D)+\Lap{\frac{c\Delta}{\epsilon_3}}$
            \Else
                \State \textbf{Output} $a_i = \top$
            \EndIf
            \State count = count + 1, {\bf Abort} if count $\ge c$.
        \Else
            \State \textbf{Output} $a_i = \bot$
        \EndIf
        \EndFor
    \end{algorithmic}
\end{algorithm}

We now prove the privacy for Alg.~\ref{alg:standard}; the proof requires only minor changes from the proof of Theorem \ref{thm:algsparseour}.

\begin{theorem}
\label{standardthm}
Alg.~\ref{alg:standard} is $(\epsilon_1+\epsilon_2+\epsilon_3)$-DP.
\end{theorem}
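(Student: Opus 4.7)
The plan is to mimic the proof of Theorem~\ref{thm:algsparseour}, changing only what is needed to accommodate the additional noisy numeric outputs produced on positive queries. Fix neighboring datasets $D,D'$ and any realizable output sequence $\vec{a}=\langle a_1,\ldots,a_\ell\rangle$, where now $a_i\in\{\nega\}$ for $i\in\INEGA$ and $a_i\in\mathbb{R}$ for $i\in\IPOSI$ (the case $a_i=\posi$, i.e.\ $\epsilon_3=0$, is already covered by Theorem~\ref{thm:algsparseour}). As before, $|\IPOSI|\le c$.

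First I would write
\begin{align*}
\Pr{\AA(D)=\vec{a}} \;=\; \int_{-\infty}^{\infty} \Pr{\rho=z}\; f_{D}(z)\; h_{D}(z)\: dz,
\end{align*}
where $f_{D}(z)=\prod_{i\in\INEGA}\Pr{q_i(D)\splus\nu_i<T_i\splus z}$ is exactly as in Theorem~\ref{thm:algsparseour}, and
\begin{align*}
h_{D}(z) \;=\; \prod_{i\in\IPOSI}\Pr{q_i(D)\splus\nu_i\ge T_i\splus z\;\wedge\;q_i(D)\splus\gamma_i=a_i},
\end{align*}
with $\gamma_i\sim\Lap{c\Delta/\epsilon_3}$ the fresh noise used on Line~7. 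The key observation is that $\gamma_i$ is drawn independently of $\nu_i$ (and of $\rho$ and of every other $\nu_j,\gamma_j$), so the conjunction factors as a product of two independent probabilities:
\begin{align*}
h_{D}(z) \;=\; \underbrace{\prod_{i\in\IPOSI}\Pr{q_i(D)\splus\nu_i\ge T_i\splus z}}_{=\,g_{D}(z)}\;\cdot\;\underbrace{\prod_{i\in\IPOSI}\Pr{q_i(D)\splus\gamma_i=a_i}}_{=:\,w_{D}}.
\end{align*}
The first factor is the same $g_{D}(z)$ that appears in Theorem~\ref{thm:algsparseour}, and the analysis there already gives $g_{D}(z)\le\myexp{\epsilon_2}g_{D'}(z\splus\Delta)$. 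The second factor does not depend on $z$ at all; here I would apply the standard Laplace-mechanism bound one coordinate at a time. Since $\gamma_i\sim\Lap{c\Delta/\epsilon_3}$ and $|q_i(D)-q_i(D')|\le\Delta$, the density ratio for each factor is at most $\myexp{\epsilon_3/c}$, and since there are at most $c$ factors in $\IPOSI$, we get $w_{D}\le\myexp{\epsilon_3}\,w_{D'}$.

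Combining these three bounds with $\Pr{\rho=z}\le\myexp{\epsilon_1}\Pr{\rho=z\splus\Delta}$ and $f_{D}(z)\le f_{D'}(z\splus\Delta)$ from Lemma~\ref{lemma:negative}, substituting $z'=z\splus\Delta$ in the integral, yields
\begin{align*}
\Pr{\AA(D)=\vec{a}} \;\le\; \myexp{\epsilon_1\splus\epsilon_2\splus\epsilon_3}\int_{-\infty}^{\infty}\Pr{\rho=z'}\,f_{D'}(z')\,g_{D'}(z')\,w_{D'}\,dz' \;=\; \myexp{\epsilon_1\splus\epsilon_2\splus\epsilon_3}\Pr{\AA(D')=\vec{a}},
\end{align*}
which proves the theorem. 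I do not anticipate a real obstacle: the only delicate point is justifying the factorization of $h_{D}(z)$, and that is immediate from the independence of $\nu_i$ and $\gamma_i$ in the algorithm. Everything else is a routine re-use of the Lemma~\ref{lemma:negative} / Theorem~\ref{thm:algsparseour} argument plus the observation that capping positive outcomes at $c$ is exactly what lets $\Lap{c\Delta/\epsilon_3}$ suffice to charge a total of $\epsilon_3$ for all numeric releases.
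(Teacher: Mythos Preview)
Your proof is correct. The paper takes a slightly more modular route: it decomposes Alg.~\ref{alg:standard} into two phases and invokes sequential composition, arguing that the first phase (outputting the $\{\nega,\posi\}$ indicator vector) is $(\epsilon_1+\epsilon_2)$-DP by the argument of Theorem~\ref{thm:algsparseour}, and the second phase (releasing $q_i(D)+\Lap{c\Delta/\epsilon_3}$ for the at most $c$ selected indices) is $\epsilon_3$-DP by the Laplace mechanism. You instead write one joint density and factor $h_D(z)=g_D(z)\cdot w_D$ using the independence of $\nu_i$ and $\gamma_i$, then bound $w_D\le e^{\epsilon_3}w_{D'}$ coordinatewise.

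These are the same argument at different levels of abstraction: your factorization is precisely what justifies the paper's two-phase composition. The paper's version is cleaner to state, while yours has the virtue of making explicit why this works when Alg.~\ref{alg:sparselecture} does not---namely, because the numeric release uses fresh noise $\gamma_i$ independent of $\nu_i$, so $w_D$ carries no information about the noisy threshold and pulls out of the integral over $z$.
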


\begin{proof}  Alg.~\ref{alg:standard} can be  divided into two phases, the first phase  outputs a vector to mark which query is above the threshold and the second phase uses the Laplace mechanism to output noisy counts for the queries that are found to be above the threshold in the first phase.  Since the second phase is $\epsilon_3$-DP,   it suffices to show that the first phase is   $(\epsilon_1+\epsilon_2)$-DP.
 %
For any output vector $\vec{a} \in \{\posi, \nega\}^\ell$,  we want to show

{\small
\begin{align*}
\Pr{\AA(D) =\vec{a}} & =  \int_{-\infty}^{\infty} \Pr{\rho\!=\!z}\;\; f_{D}(z)\;\; g_{D}(z) \: dz        \\
                     & \le \int_{-\infty}^{\infty} e^{\epsilon_1+\epsilon_2}\Pr{\rho\!=\!z+\Delta}\; f_{D'}(z+\Delta)\; g_{D'}(z+\Delta) \: dz
\\
 &=  e^{\epsilon_1+\epsilon_2}\Pr{\AA(D^{\prime}) =\vec{a}}.
 \end{align*}
}

This holds because,  similarly to the proof of Theorem \ref{thm:algsparseour},
{\small
\begin{align*}
  \Pr{\rho\!=\!z} & \le e^{\epsilon_1}\Pr{\rho\!=\!z+\Delta}, \\
 f_{D}(z) & = \prod\limits_{i \in \INEGA } \Pr{q_i(D)\splus\nu_i\!<\! T_i \splus z} \le f_{D'}(z+\Delta),\\
 g_{D}(z) & = \prod\limits_{i \in \IPOSI } \Pr{q_i(D)\splus \nu_i\!\ge\!  T_i \splus z} \le e^{\epsilon_2} g_{D'}(z+\Delta).
\end{align*}
}
\end{proof}

\subsection{Optimizing Privacy Budget Allocation}\label{sec:budget_alloc_opt}

In Alg.~\ref{alg:standard}, one needs to decide how to divide up a total privacy budget $\epsilon$ into $\epsilon_1,\epsilon_2,\epsilon_3$.  We note that $\epsilon_1+\epsilon_2$ is used for outputting the indicator vector, and $\epsilon_3$ is used for outputting the noisy counts for queries found to be above the threshold; thus the ratio of $(\epsilon_1+\epsilon_2):\epsilon_3$ is determined by the domain needs and should be an input to the algorithm.

On the other hand, the ratio of $\epsilon_1:\epsilon_2$ affects the accuracy of SVT.  Most variants use $1:1$, without a clear justification.  To choose a ratio that can be justified, we observe that this ratio affects the accuracy of the following comparison:
{\small
$$q_i(D)+\Lap{\frac{2c\Delta}{ \epsilon_2}}  \geq T + \Lap{\frac{\Delta}{ \epsilon_1}}.$$
}
To make this comparison as accurate as possible, we want to minimize the variance of $\Lap{\frac{\Delta}{ \epsilon_1}} - \Lap{\frac{2c\Delta}{ \epsilon_2}}$, which is
\begin{equation*}
2 \left(\frac{\Delta}{\epsilon_1}\right)^2+2\left( \frac{2c\Delta}{\epsilon_2}\right)^2,
\end{equation*}
 when $\epsilon_1+\epsilon_2$ is fixed. This is minimized when
\begin{equation}
 \epsilon_1:\epsilon_2= 1:(2c)^{2/3}. \label {optimationbudget}
\end{equation}
We will evaluate the improvement resulted from this optimization in Section~\ref{sec:experiments}.

\subsection{SVT for Monotonic Queries}\label{sec:mono_queries}

In some usages of SVT, the queries are monotonic.  That is, when changing from $D$ to $D^{\prime}$, all queries whose answers are different change in the same direction, i.e., there do not exist $q_i,q_j$ such that $(q_i(D)>q_i(D^{\prime})) \wedge (q_j(D)<q_j(D^{\prime}))$.  That is, we have either $\forall_i\, q_i(D) \geq q_i(D^{\prime})$, or $\forall_i\, q_i(D') \geq q_i(D)$.
This is the case when using SVT for frequent itemset mining in~\cite{LC14} with neighboring datasets defined as adding or removing one tuple.
For monotonic queries, adding $\Lap{\frac{c\Delta}{ \epsilon_2}}$ instead of $\Lap{\frac{2c\Delta}{ \epsilon_2}}$ suffices for privacy.

\begin{theorem}
\label{non2thm}
Alg.~\ref{alg:standard}  with $\nu_i= \Lap{\frac{c\Delta}{\epsilon_2}}$ in line 3 satisfies  $(\epsilon_1+\epsilon_2+\epsilon_3) $-DP when all queries are monotonic.
\end{theorem}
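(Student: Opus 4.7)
The plan is to adapt the proof of Theorem~\ref{standardthm} by using the monotonicity hypothesis to tighten the analysis of the positive-outcome factor $g_D(z)$ so that the noise scale $2c\Delta/\epsilon_2$ can be halved to $c\Delta/\epsilon_2$. As in Theorem~\ref{standardthm}, it suffices to establish $(\epsilon_1+\epsilon_2)$-DP for the phase that emits the $\{\posi,\nega\}$-indicator vector, since the $\epsilon_3$-budget Laplace release of noisy counts composes sequentially with it. Fix a neighboring pair $D\simeq D'$ and an output $\vec{a}\in\{\posi,\nega\}^\ell$ with $|\IPOSI|\le c$, and write $\Pr{\AA(D)=\vec{a}}=\int \Pr{\rho=z}\,f_D(z)\,g_D(z)\,dz$ exactly as in the proof of Theorem~\ref{thm:algsparseour}.

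By monotonicity, for the pair $(D,D')$ either $q_i(D)\le q_i(D')$ for every $i$ or $q_i(D)\ge q_i(D')$ for every $i$. I would split the argument into these two cases and align the change of integration variable with the monotone direction so that the per-query rightward Laplace shift needed for $g_D$ never exceeds $\Delta$ rather than $2\Delta$.

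In the case $q_i(D)\le q_i(D')$ for all $i$, I would substitute $z'=z+\Delta$ as in Theorem~\ref{standardthm}. The threshold-noise factor contributes $e^{\epsilon_1}$ via $\Pr{\rho=z}\le e^{\epsilon_1}\Pr{\rho=z+\Delta}$, and $f_D(z)\le f_{D'}(z+\Delta)$ holds with no extra factor since $T_i+z-q_i(D)\le T_i+z+\Delta-q_i(D')$. For the positive-outcome factor, let $A_i=T_i+z-q_i(D)$ and $B_i=T_i+z+\Delta-q_i(D')$, so that $B_i-A_i=\Delta-(q_i(D')-q_i(D))\in[0,\Delta]$; the tail inequality $\Pr{\nu_i\ge A_i}\le e^{(B_i-A_i)/(c\Delta/\epsilon_2)}\Pr{\nu_i\ge B_i}$ for $\nu_i\sim\Lap{c\Delta/\epsilon_2}$, which follows from the Laplace hazard rate being bounded by $\epsilon_2/(c\Delta)$ everywhere, yields a factor of at most $e^{\epsilon_2/c}$ per positive query and $e^{\epsilon_2}$ over the $\le c$ positive coordinates. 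Combining the three bounds gives $\Pr{\AA(D)=\vec{a}}\le e^{\epsilon_1+\epsilon_2}\Pr{\AA(D')=\vec{a}}$.

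In the reverse case $q_i(D)\ge q_i(D')$ for all $i$, I would not change variables at all: the threshold-noise term contributes trivially, $f_D(z)\le f_{D'}(z)$ still holds, and for $g_D$ the per-coordinate shift $q_i(D)-q_i(D')\in[0,\Delta]$ is already at most $\Delta$, so the same Laplace tail bound gives $g_D(z)\le e^{\epsilon_2}g_{D'}(z)$, hence $\Pr{\AA(D)=\vec{a}}\le e^{\epsilon_2}\Pr{\AA(D')=\vec{a}}\le e^{\epsilon_1+\epsilon_2}\Pr{\AA(D')=\vec{a}}$. The main conceptual observation is that the factor of 2 in Theorem~\ref{standardthm} arose because a worst-case $(D,D')$ could simultaneously push positive-outcome queries down by $\Delta$ and negative-outcome queries up by $\Delta$, producing a net shift of $2\Delta$ in the positive term; monotonicity forbids exactly this antagonistic combination, so I do not anticipate a genuine obstacle beyond carefully tracking the direction of shifts and verifying the two cases in parallel.
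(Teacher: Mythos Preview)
Your proposal is correct and follows essentially the same approach as the paper: both reduce to the indicator phase, split into the two monotone cases, and in the case $q_i(D)\ge q_i(D')$ compare without a threshold shift (paying only $e^{\epsilon_2}$), while in the case $q_i(D)\le q_i(D')$ shift by $\Delta$ and pay $e^{\epsilon_1+\epsilon_2}$. The only cosmetic difference is that you name the thresholds $A_i,B_i$ and invoke the Laplace hazard-rate bound explicitly, whereas the paper writes the inequalities inline.
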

\begin{proof}
Because the second phase of Alg.~\ref{alg:standard} is still $\epsilon_3$-DP, we just need to show that for any output vector $\vec{a}$,

\vspace*{-0.15in}{\small \begin{align*}
\Pr{\AA(D)=\vec{a}} & = \int_{-\infty}^{\infty} \Pr{\rho\!=\!z}\;\; f_{D}(z)\;\; g_{D}(z) \: dz  \\
                    & \le e^{\epsilon_1+\epsilon_2} \Pr{\AA(D^{\prime})=  \vec{a}},
\\
\mbox{where } \;f_{D}(z) & = \prod\limits_{i \in \INEGA } \Pr{q_i(D)\splus\nu_i\!<\! T_i \splus z},
\\
\mbox{ and } \; g_{D}(z)& = \prod\limits_{i \in \IPOSI } \Pr{q_i(D)\splus \nu_i\!\ge\!  T_i \splus z}.
\end{align*}
}
It suffices to show that either $\Pr{\rho\!=\!z}f_{D}(z)g_{D}(z) \leq e^{\epsilon_1+\epsilon_2} \Pr{\rho\!=\!z}f_{D'}(z)g_{D'}(z)$, or $\Pr{\rho\!=\!z} f_{D}(z)g_{D}(z) \leq e^{\epsilon_1+\epsilon_2} \Pr{\rho\!=\!z+\Delta}f_{D'}(z+\Delta)g_{D'}(z+\Delta)$.

First consider the case that $q_i(D) \ge q_i(D^{\prime})$ for any query $q_i$.  In this case, we have
{\small \begin{align*}
 \Pr{q_i(D) + \nu_i < T_i + z} \leq \Pr{q_i(D') + \nu_i < T_i + z},
\end{align*}
}
and thus $f_{D}(z)\le f_{D'}(z)$. Note that  $q_i(D)-q_i(D') \le \Delta$. Therefore, $g_{D}(z) \leq e^{\epsilon_2} g_{D'}(z)$, without increasing the noisy threshold by $\Delta$, because $\Pr{q_i(D)+\nu_i\ge T_i+z} \leq \Pr{q_i(D') + \nu_i \ge  T_i + z-\Delta}  \leq e^{\frac{\epsilon_2}{c}}\Pr{q_i(D')+\nu_i\ge T_i+z}$ since $\nu_i=\Lap{\frac{c\Delta}{\epsilon_2}}$.

Then consider the case in which $q_i(D) \le q_i(D^{\prime})$ for any query $q_i$.  We have the usual
{\small \begin{align*}
& f_{D}(z) \le f_{D'}(z+\Delta), \\
\mbox{ and } \;  &\Pr{\rho\!=\!z}   \le e^{\epsilon_1}\Pr{\rho\!=\!z+\Delta},
\end{align*}
}as in previous proofs. With the constraint $q_i(D) \le q_i(D^{\prime})$, using $\nu_i= \Lap{\frac{c\Delta}{\epsilon_2}}$ suffices to ensure that {\small $\Pr{q_i(D)+\nu_i\ge T_i+z} \leq e^{\frac{\epsilon_2}{c}}\Pr{q_i(D')+\nu_i\ge T_i+\Delta+z}$.}  Thus $g_{D}(z) \leq e^{\epsilon_2} g_{D'}(z+\Delta)$ holds.
\end{proof}

For monotonic queries, the  optimization  of privacy budget allocation (\ref{optimationbudget})  becomes $ \epsilon_1:\epsilon_2= 1:c^{2/3}$.
\section{SVT versus EM}\label{sec:svtvsem}


We now discuss the application of SVT in the non-interactive setting, where all the queries are known ahead of the time.  We note that most recent usages of SVT, e.g.,~\cite{CXZX15, LC14, SS15, SCM14, ZCP+14}, are in the non-interactive setting.  Furthermore, these applications of SVT aim at selecting up to $c$ queries with the highest answers.  
In~\cite{LC14}, SVT is applied to find the $c$ most frequent itemsets, where the queries are the supports for the itemsets. 
In~\cite{CXZX15}, the goal of using SVT is to determine the structure of a Bayesian Network that preserves as much information of the dataset as possible.  To this end, they select attribute groups that are highly correlated and create edges for such groups in the network.  While the algorithm in~\cite{CXZX15} takes the form of selecting attribute groups with score above a certain threshold, the real goal is to select the groups with the highest scores.
In~\cite{SS15}, SVT is used to select parameters to be shared when trying to learn neural-network models in a private fashion.  Once selected, noises are added to these parameters before they are shared.  The selection step aims at selecting the parameters with the highest scores.  


\mypara{EM or SVT.}
In non-interactive setting, one can also use the Exponential Mechanism (EM)~\cite{MT07} to achieve the same objective of selecting the top $c$ queries.  More specifically, one runs EM $c$ times, each round with privacy budget $\frac{\epsilon}{c}$.  The quality for each query is its answer; thus each query is selected with probability proportion to $\exp{\left(\frac{\epsilon }{2c\Delta}\right)}$ in the general case and to $\exp{\left(\frac{\epsilon }{c\Delta}\right)}$ in the monotonic case.  After one query is selected, it is removed from the pool of candidate queries for the remaining rounds.

An intriguing question is which of SVT and EM offers higher accuracy.
Theorem~3.24 in~\cite{DPBook} regarding the utility of SVT with $c=\Delta=1$ states: For any sequence of $k$ queries $f_1, \ldots , f_k$ such that
$|\{i < k : f_i(D) \geq T - \alpha\}| = 0$ (i.e. the only query close to being above threshold is possibly the last one), SVT
is $(\alpha, \beta)$ accurate (meaning that with probability at least $1-\beta$, all queries with answer below $T-\alpha$ result in $\nega$ and all queries with answers above $T-\alpha$ result in $\posi$) for: $\alpha_{\mathrm{SVT}} = 8(\log k +\log(2/\beta))/\epsilon $.

In the case where the last query is at least $T+\alpha$, being $(\alpha,\beta)$-correct ensures that with probability at least $1-\beta$, the correct selection is made.  For the same setting, we say that {\it EM is $(\alpha,\beta)$-correct} if given $k-1$ queries with answer $\leq T-\alpha$ and one query with answer $\ge T+\alpha$, the correct selection is made with probability at least $1-\beta$.  The probability of selecting the query with answer $\ge T+\alpha$ is at least $\frac{e^{\epsilon(T+\alpha)/2}}{(k-1)e^{\epsilon(T-\alpha)/2}+e^{\epsilon(T+\alpha)/2}}$ by the definition of EM. To ensure this probability is at least $1-\beta$,
\begin{equation*}
\alpha_{\mathrm{EM}}=(\log(k-1)+\log((1-\beta)/\beta))/\epsilon,
\end{equation*}
which is less than $1/8$ of the $\alpha_{\mathrm{SVT}}$, which suggests that EM is more accurate than SVT.

The above analysis relies on assuming that the first $k-1$ queries are no more than $T-\alpha$.  When that is not assumed, it is difficult analyze the utility of either SVT or EM.  Therefore, we will use experimental methods to compare SVT with EM.

\mypara{SVT with Retraversal.}
We want to find the most optimized version of SVT to compare with EM, and note that another interesting parameter that one can tune when applying SVT is that of the threshold $T$. When $T$ is high, the algorithm may select fewer than $c$ queries after traversing all queries.  Since roughly each selected query consumes $\frac{1}{c}$'th of the privacy budget, outputting few than $c$ queries kind of ``wasted'' the remaining privacy budget.  When $T$ is low, however, the algorithm may have selected $c$ queries before encountering later queries.  No matter how large some of these later query answers are, they cannot be selected.

We observe that in the non-interactive setting, there is a way to deal with this challenge.  One can use a higher threshold $T$, and when the algorithm runs out of queries before finding $c$ above-threshold queries, one can retraverse the list of queries that have not been selected so far, until $c$ queries are selected.  However, it is unclear how to select the optimal threshold.  In our experiments, we consider SVT-ReTr, which increases the threshold $T$ by different multiples of the scale factor of the Laplace noise injected to each query, and applies the retraversal technique.


\section{Evaluation} \label{sec:experiments}

In this section, we experimentally compare the different versions of the SVT algorithm, including our proposed SVT algorithm with different privacy budget allocation methods.  We also compare the SVT variants applicable in the non-interactive setting with EM.

\mypara{Utility Measures.}  
%
%
Since the goal of applying SVT or EM is to select the top queries, one standard metric is \emph{False negative rate (FNR)}, i.e., the fraction of true top-$c$ queries that are missed.
When an algorithm outputs exactly $c$ results, the FNR is the same as the False Positive Rate, the fraction of incorrectly selected queries.

The FNR metric has some limitations.  First, missing the highest query will be penalized the same as missing the $c$-th one.  Second, selecting a query with a very low score will be penalized the same as selecting the $(c+1)$-th query, whose score may be quite close to the $c$'th query.  We thus use another metric that we call Score Error Rate (SER), which measures the ratio of ``missed scores'' by selecting $S$ instead of the true top $c$ queries, denoted by $\mathsf{Top}_c$.
	$$\mathit{SER} = 1.0 - \frac{\mathrm{avgScore}(S)}{\mathrm{avgScore}(\mathsf{Top}_c)}.$$

We present results for both FNR and SER and observe that the correlation between them is quite stable.

\begin{table}
\centering
\begin{tabular}{|c|c|c|c|c|c|}\hline
{\bf Dataset} & {\bf Number of Records} & {\bf Number of Items}\\ \hline
BMS-POS & 515,597 & 1,657\\ \hline
Kosarak & 990,002 & 41,270\\ \hline
AOL & 647,377 & 2,290,685\\ \hline
Zipf & 1,000,000 & 10,000\\ \hline
\end{tabular}
\caption{Dataset characteristics}\label{tab:dataset_desc}
\end{table}

\begin{table}
\centering
\begin{tabular}{|c|c|c|}\hline
{\bf Settings} & {\bf Methods} & {\bf Description}\\ \hline
\multirow{2}{*}{Interactive} & SVT-DPBook & DPBook SVT (Alg.~\ref{alg:sparsebook}).  \\ \cline{2-3}
& SVT-S & Standard SVT (Alg.~\ref{alg:standard}). \\ \hline
\multirow{2}{*}{Non-interactive} & SVT-ReTr & Standard SVT with Retraversal. \\ \cline{2-3}
& EM & Exponential Mechanism. \\ \hline
\end{tabular}
\caption{Summary of algorithms}\label{tab:methods_summary}
\end{table}

\begin{figure}[h]
	\includegraphics[width = 3.0in]{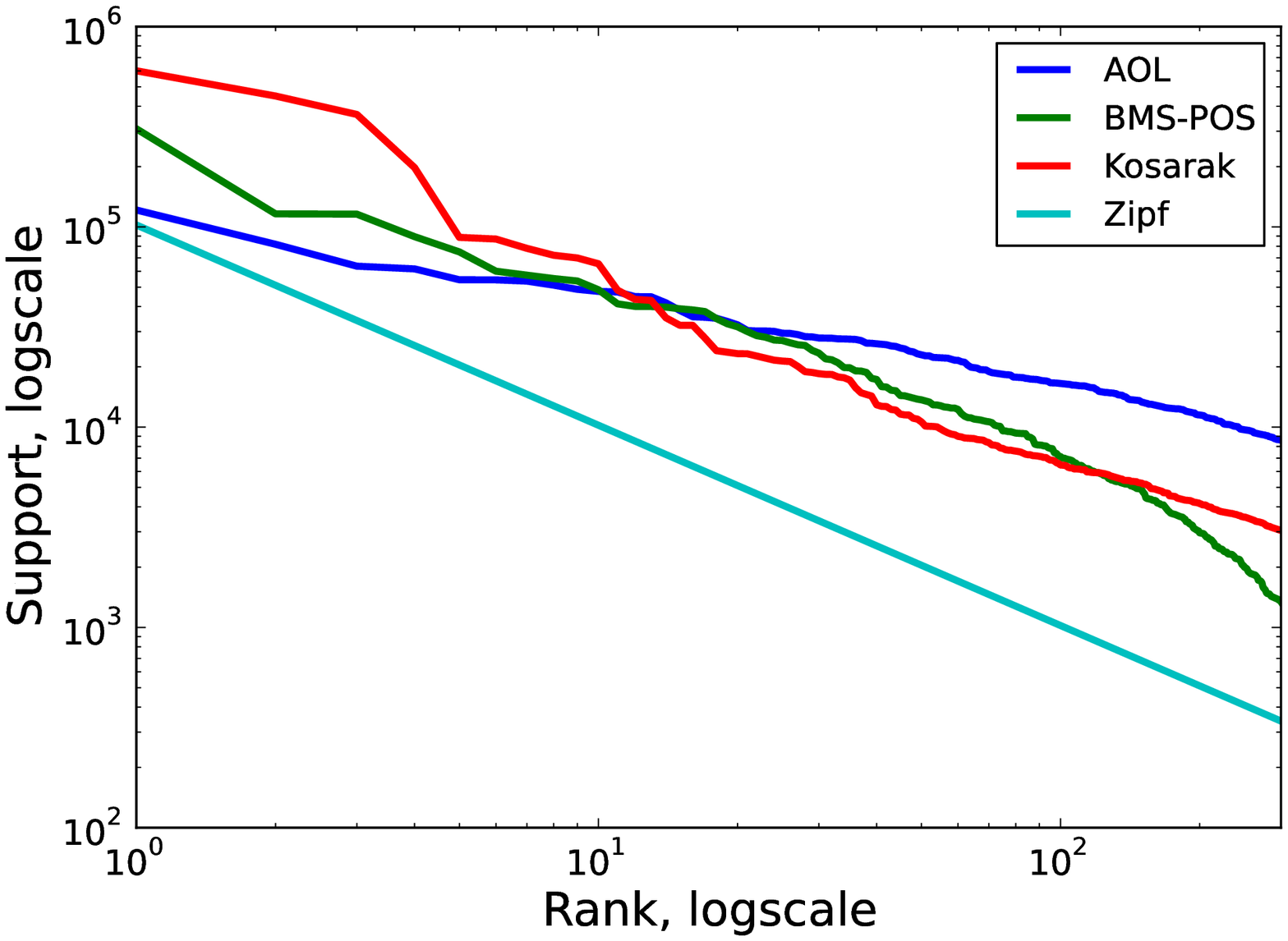}
	\caption{The distribution of 300 highest scores from experiment datasets.}\label{fig:zipf_plot}
\end{figure}

\mypara{Datasets.}  The performance of different algorithms would be affected by the distribution of query scores, we thus want to evaluate the algorithms on several representative distributions.  In the experiments, we use the item frequencies in three real datasets: BMS-POS, Kosarak and AOL as representative distributions of query scores.
In addition, we also use the distribution inspired by the Zipf's law, which states that given some corpus of natural language utterances, the frequency of any word is inversely proportional to its rank in the frequency table.  Similar phenomenon occurs in many other rankings unrelated to language, such as the population ranks of cities in various countries, corporation sizes, income rankings, ranks of number of people watching the same TV channel, and so on.
In this distribution, the $i$'th query has a score proportional to $\frac{1}{i}$.
The characteristics of these datasets are summarized in Table~\ref{tab:dataset_desc}, and the distribution of the $300$ highest scores are shown in Figure~\ref{fig:zipf_plot}.

\begin{figure*}[p]
\begin{tabular}{cc}
    \multicolumn{2}{c}{\includegraphics[width = 6.0in]{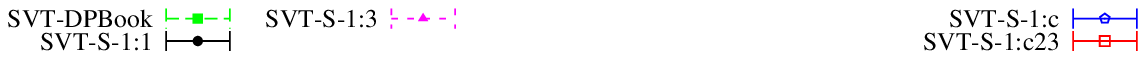}}\\
	\includegraphics[width = 3.0in]{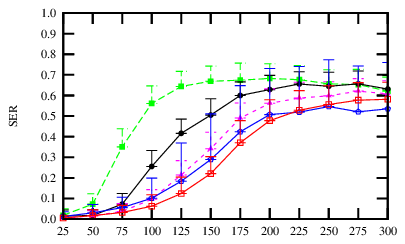}
	& \includegraphics[width = 3.0in]{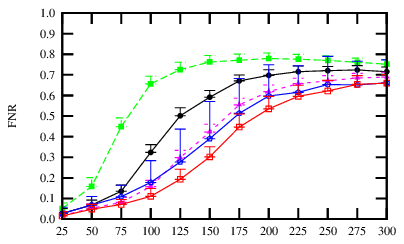}\\
	(a) BMS-POS, SER & (b) BMS-POS, FNR\\
	\includegraphics[width = 3.0in]{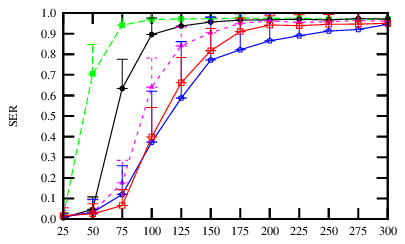}
	& \includegraphics[width = 3.0in]{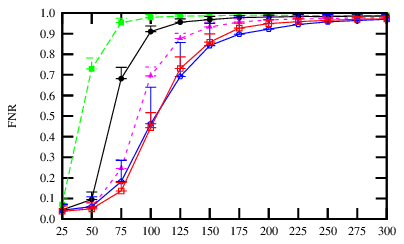}\\
	(c) Kosarak, SER & (d) Kosarak, FNR\\
	\includegraphics[width = 3.0in]{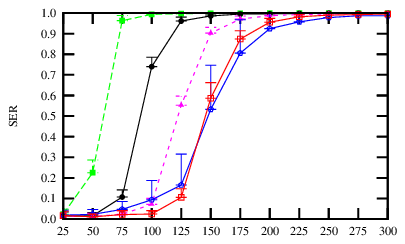}
	& \includegraphics[width = 3.0in]{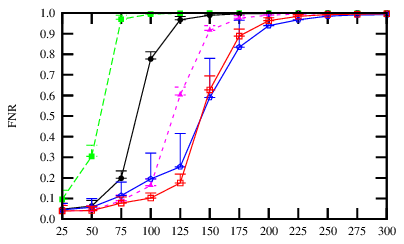}\\
	(e) AOL, SER & (f) AOL, FNR\\
	\includegraphics[width = 3.0in]{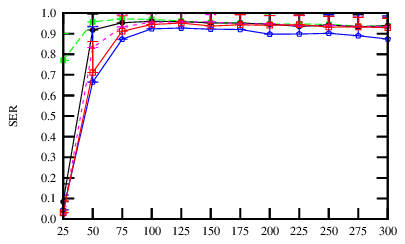}
	& \includegraphics[width = 3.0in]{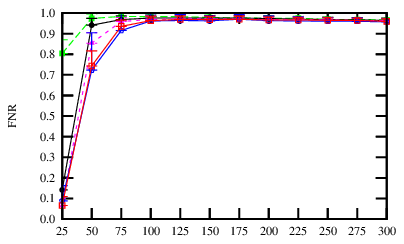}\\
	(g) Zipf Synthe, SER & (h) Zipf Synthe, FNR\\

\end{tabular}
	\caption{Comparison of interactive approaches: SVT-DPBook and SVT-S with different budget allocation.  Privacy budget $\epsilon = 0.1$.  x-axis: top-$c$}\label{fig:comp_Interactive}
\end{figure*}

\begin{figure*}[p]
\begin{tabular}{cc}
    \multicolumn{2}{c}{\includegraphics[width = 6.0in]{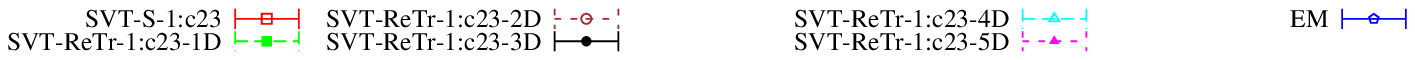}}\\
	\includegraphics[width = 3.0in]{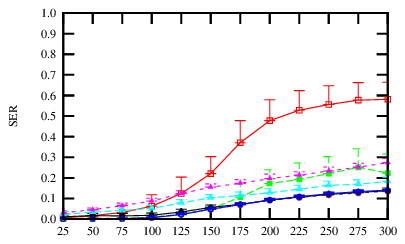}
	& \includegraphics[width = 3.0in]{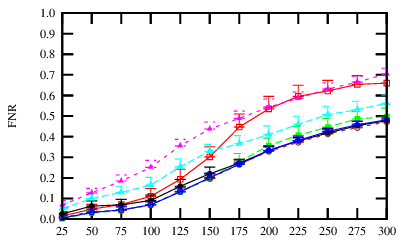}\\
	(a) BMS-POS, SER & (b) BMS-POS, FNR\\
	\includegraphics[width = 3.0in]{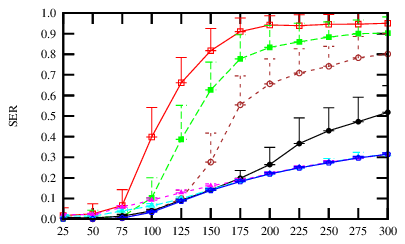}
	& \includegraphics[width = 3.0in]{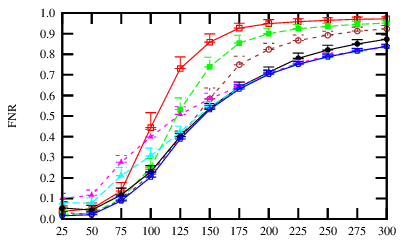}\\
	(c) Kosarak, SER & (d) Kosarak, FNR\\
	\includegraphics[width = 3.0in]{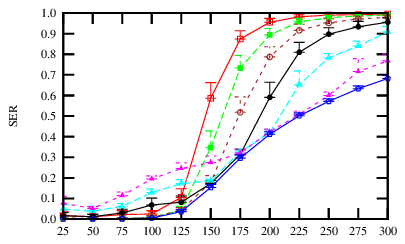}
	& \includegraphics[width = 3.0in]{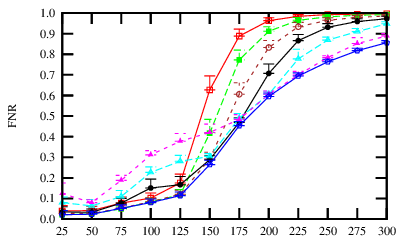}\\
	(e) AOL, SER & (f) AOL, FNR\\
	\includegraphics[width = 3.0in]{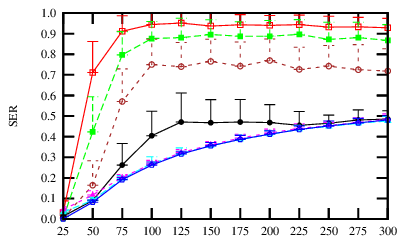}
	& \includegraphics[width = 3.0in]{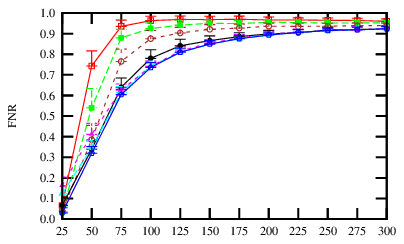}\\
	(g) Zipf Synthe, SER & (h) Zipf Synthe, FNR\\

\end{tabular}
	\caption{Comparison of non-interactive approaches: EM and SVT-ReTr with different thresholds.  Privacy budget $\epsilon = 0.1$.  x-axis: top-$c$.}\label{fig:comp_noninteractive}
\end{figure*}

\mypara{Evaluation Setup.}  We consider the following algorithms.  SVT-DPBook is from Dwork and Roth's 2014 book~\cite{DPBook} (Alg.~\ref{alg:sparsebook}).  SVT-S is our proposed standard SVT, i.e., Alg.~\ref{alg:standard} without numerical outputs ($\epsilon_3 = 0$); and since the count query is monotonic, we use the version for monotonic queries in Section~\ref{sec:mono_queries}. We consider four privacy budget allocations, 1:1, 1:3, 1:$c$ and 1:$c^{2/3}$, where the last is what our analysis suggests for the monotonic case.  These algorithms can be applied in both the interactive and the non-interactive setting.

For the non-interactive setting, we consider EM and SVT-ReTr, which is SVT with the optimizations of increasing the threshold and retraversing through the queries (items) until $c$ of them are selected.  We fix the privacy budget allocation to be $1:c^{2/3}$ and vary the amount we increase the threshold from 1D, 2D, $\ldots$, to 5D, where 1D means adding one standard deviation of the added noises to the threshold.

We vary $c$ from 25 to 300, and each time uses the average score for the $c$'th query and the $c+1$'th query as the threshold.  We show results for privacy budget $\epsilon=0.1$ in the paper.  We omit results for other $\epsilon$ values because of space limitation.  We note that varying $c$ have a similar impact of varying $\epsilon$, since the accuracy of each method is mostly affect by $\frac{\epsilon}{c}$; therefore the impact of different $\epsilon$ can be seen from different $c$ values.  
We run each experiment 100 times, each time randomizing the order of items to be examined.  We report the average and standard deviation of SER.  All algorithms are implemented in Python 2.7 and all the experiments are conducted on an Intel Core i7-3770 3.40GHz PC with 16GB memory.

\mypara{Results in the Interactive Setting.}
Figure~\ref{fig:comp_Interactive} reports the results for the algorithms that can be applied in the interactive setting.
While it is clear that in some settings (such as when $c=25$) all methods are quite accurate, and in some other settings all methods are very inaccurate (such as when $c \geq 100$ for the Zipf dataset), in the settings in between the two extremes, the differences between these methods are quite large.

SVT-DPBook performs the worst, followed by SVT-S-1:1, then by SVT-S-1:3, and finally by SVT-S-1:c and SVT-S-1:c23.  The differences among these algorithms can be quite pronounced.  For example, on the Kosarak dataset, with $\epsilon=0.1$, $c=50$, SVT-DPBook's SER is $0.705$, which means that the average support of selected items is only around $30\%$ of that for the true top-$50$ items, which we interpret to mean that the output is meaningless.  In contrast, all four SVT-S algorithms have SER less than 0.05, suggesting high accuracy in the selection.
SVT-DPBook's poor performance is due to the fact that the threshold is perturbed by a noise with scale as large as $c\Delta/\epsilon$.

For the differences among the four budget allocation approaches, it appears that the performance of $1:c$ and $1:c^{\frac{2}{3}}$ are clearly better than the others; and their advantages over the standard $1:1$ allocation is quite pronounced.  Which of $1:c$ and $1:c^{\frac{2}{3}}$ is better is less clear.  In general, the former is better for larger $c$ values, where the error is higher, and the latter is better for smaller $c$ values, where the error is lower.  Also note that $1:c$ results in a significantly larger standard deviation.  For these reasons, we interpret the results as supporting the recommendation of using  1:$c^{2/3}$ budget allocation.



\mypara{Results in the Non-interactive Setting.}
Figure~\ref{fig:comp_noninteractive} reports the results for the algorithms that can be applied in the noninteractive setting.  We observe that EM clearly performs better than
SVT-ReTr-1:c23, which performs essentially the same as SVT-S-1:c23, which is the best algorithm for the interactive case, and is already much better than SVT algorithms used in the literature.  For example, for the AOL dataset with $c=150$, EM's SER is $0.15$, while SVT-S with $1:c^{2/3}$ allocation has SER of $0.59$, and SVT-S with $1:1$ allocation has SER of $0.99$.

It is interesting to see that increasing the threshold can significantly improve the accuracy of SVT with Retraversal.  However, the best threshold increment value depends on the dataset and the number of items to be selected.  For example, 5D works well for Zipf, and for Kosarak and AOL when $c$ is large, but works not as well for BMS and for Kosarak and AOL when $c$ is small.
Since it is unclear how to select the best threshold increment value, and even with the best threshold increment, SVT-ReTr performs no better than EM, our experiments suggest that usage of SVT should be replaced by EM in the non-interactive setting.

\section{Related Work} \label{sec:related}
SVT was introduced by  Dwork et al.~\cite{DNR+09}, and improved by Roth and Roughgarden~\cite{RR10} and by Hardt and Rothblum~\cite{HR10}.  
These usages are in an interactive setting.  An early description of SVT as a stand-alone technique appeared in Roth's 2011 lecture notes~\cite{Roth2011}, which is Alg.~\ref{alg:sparselecture} in this paper, and is in fact $\infty$-DP.  The algorithms in~\cite{RR10,HR10} also has another difference, as discussed in Section~\ref{sec:other}. Another version of SVT appeared in the 2014 book~\cite{DPBook}, which is Alg.~\ref{alg:sparsebook}.  This version is used in some papers, e.g.,~\cite{SS15}.  We show that it is possible to add less noise and obtain higher accuracy for the same privacy parameter.


Lee and Clifton \cite{LC14} used a variant of SVT  (see Algorithm \ref{alg:LC}) to find itemsets whose support is above  the threshold. Stoddard et al.~\cite{SCM14} proposed another variant (see Algorithm \ref{alg:SCM}) for private feature selection for classification to pick out  the set of features with scores greater than the perturbed threshold.
Chen et al.~\cite{CXZX15}  employed yet another variant of  SVT (see Algorithm \ref{alg:CXZX}) to return attribute pairs with mutual information  greater than the corresponding noisy threshold.
These usages are not private.
Some of these errors were pointed in~\cite{CM15}, in which a generalized private threshold testing algorithm (GPTT) that attempts to model the SVT variants in \cite{LC14, SCM14, CXZX15} was introduced. The authors showed that GPTT did not satisfy $\epsilon'$-DP for any finite $\epsilon'$. But there is an error in the proof, as shown in Section~\ref{sec:gptt}. Independent from our work, Zhang et al. \cite{PrivTree} presented two proofs that the variant of SVT violates DP without discussing the cause of the errors.  Also presented in~\cite{PrivTree} is a special case of our proposed Alg.~1 for counting queries.
To our knowledge, the general version of our improved SVT (Alg.~\ref{alg:sparseour} and Alg.~\ref{alg:standard}), the techniques of optimizing budget allocation, the technique of using re-traversal to improve SVT, and the comparison of SVT and EM are new in our work.


\section{Conclusion} \label{sec:conclusion}
We have introduced a new version of SVT that provides better utility.  We also introduce an effective technique to improve the performance of SVT by optimizing the distribution of privacy budget.  These enhancements achieve better utility than the state of the art SVT and can be applied to improve utility in the interactive setting.  We have also explained the misunderstandings and errors in a number of papers that use or analyze SVT; and believe that these will help clarify the misunderstandings regarding SVT and help avoid similar errors in the future.  We have also shown that in the non-interactive setting, EM should be preferred over SVT.



\bibliographystyle{abbrv}
{\bibliography{main}}

\begin{thebibliography}{10}

\bibitem{CXZX15}
R.~Chen, Q.~Xiao, Y.~Zhang, and J.~Xu.
\newblock Differentially private high-dimensional data publication via
  sampling-based inference.
\newblock In {\em KDD}, pages 129--138, 2015.

\bibitem{CM15}
Y.~Chen and A.~Machanavajjhala.
\newblock On the privacy properties of variants on the sparse vector technique.
\newblock {\em CoRR}, abs/1508.07306, 2015.

\bibitem{DN03}
I.~Dinur and K.~Nissim.
\newblock Revealing information while preserving privacy.
\newblock In {\em PODS}, pages 202--210, 2003.

\bibitem{Dwo06}
C.~Dwork.
\newblock Differential privacy.
\newblock In {\em ICALP}, pages 1--12, 2006.

\bibitem{DMNS06}
C.~Dwork, F.~McSherry, K.~Nissim, and A.~Smith.
\newblock Calibrating noise to sensitivity in private data analysis.
\newblock In {\em TCC}, pages 265--284, 2006.

\bibitem{DMT07}
C.~Dwork, F.~McSherry, and K.~Talwar.
\newblock The price of privacy and the limits of {LP} decoding.
\newblock In {\em STOC}, pages 85--94, 2007.

\bibitem{DNR+09}
C.~Dwork, M.~Naor, O.~Reingold, G.~Rothblum, and S.~Vadhan.
\newblock On the complexity of differentially private data release: efficient
  algorithms and hardness results.
\newblock {\em STOC}, pages 381--390, 2009.

\bibitem{DPBook}
C.~Dwork and A.~Roth.
\newblock The algorithmic foundations of differential privacy.
\newblock {\em Theoretical Computer Science}, 9(3-4):211--407, 2013.

\bibitem{DRV10}
C.~Dwork, G.~N. Rothblum, and S.~Vadhan.
\newblock Boosting and differential privacy.
\newblock FOCS '10, pages 51--60, 2010.

\bibitem{DY08}
C.~Dwork and S.~Yekhanin.
\newblock New efficient attacks on statistical disclosure control mechanisms.
\newblock CRYPTO'08, pages 469--480, 2008.

\bibitem{Gupta2012}
A.~Gupta, A.~Roth, and J.~Ullman.
\newblock Iterative constructions and private data release.
\newblock In {\em TCC}, pages 339--356, 2012.

\bibitem{HR10}
M.~Hardt and G.~N. Rothblum.
\newblock A multiplicative weights mechanism for privacy-preserving data
  analysis.
\newblock In {\em FOCS}, pages 61--70, 2010.

\bibitem{LC14}
J.~Lee and C.~W. Clifton.
\newblock Top-k frequent itemsets via differentially private fp-trees.
\newblock In {\em KDD '14}, pages 931--940, 2014.

\bibitem{MT07}
F.~McSherry and K.~Talwar.
\newblock Mechanism design via differential privacy.
\newblock In {\em FOCS}, pages 94--103, 2007.

\bibitem{Roth2011}
A.~Roth.
\newblock The sparse vector technique, 2011.
\newblock Lecture notes for `` The Algorithmic Foundations of Data Privacy''.

\bibitem{RR10}
A.~Roth and T.~Roughgarden.
\newblock Interactive privacy via the median mechanism.
\newblock In {\em STOC}, pages 765--774, 2010.

\bibitem{SS15}
R.~Shokri and V.~Shmatikov.
\newblock Privacy-preserving deep learning.
\newblock In {\em CCS}, pages 1310--1321, 2015.

\bibitem{SCM14}
B.~Stoddard, Y.~Chen, and A.~Machanavajjhala.
\newblock Differentially private algorithms for empirical machine learning.
\newblock {\em CoRR}, abs/1411.5428, 2014.

\bibitem{ZCP+14}
J.~Zhang, G.~Cormode, C.~M. Procopiuc, D.~Srivastava, and X.~Xiao.
\newblock Privbayes: Private data release via bayesian networks.
\newblock In {\em SIGMOD '14}, pages 1423--1434, 2014.

\bibitem{PrivTree}
J.~Zhang, X.~Xiao, and X.~Xie.
\newblock Privtree: A differentially private algorithm for hierarchical
  decompositions.
\newblock SIGMOD '16, pages 155--170, 2016.

\end{thebibliography}

\section{Appendix}\label{app:proofs}

\subsection{Proof that Alg.~\ref{alg:sparselecture} is non-private}

\begin{theorem}
\label{prop:Rothinfinite}
Alg.~\ref{alg:sparselecture} is not $\epsilon^{\prime}$-DP for any finite $\epsilon^{\prime}$.
\end{theorem}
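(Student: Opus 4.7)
The plan is to exhibit neighboring datasets $D,D'$, a query sequence, and a specific output $\vec{a}$ for which $\Pr{\AA(D)=\vec{a}}/\Pr{\AA(D')=\vec{a}}$ can be made arbitrarily large. I will fix $T=0$, $\Delta=1$, $c=1$, and take $m{+}1$ queries with $q_i(D)=0$, $q_i(D')=1$ for $i=1,\dots,m$, and $q_{m+1}(D)=q_{m+1}(D')=0$. The target output is $\vec{a}=\langle\nega^m, v\rangle$ in which the first $m$ coordinates are $\nega$ and the last coordinate is the numeric value $v>0$ that Alg.~\ref{alg:sparselecture} is entitled to emit when a query is found to be above the noisy threshold.

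Conditioning on the threshold noise $\rho=z$, the last coordinate forces $\nu_{m+1}=v$ (the densities coincide on $D$ and $D'$ since $q_{m+1}$ does) and also demands $v\ge z$, which truncates the integration range to $z\le v$. Writing $g_\rho$ for the density of $\rho=\Lap{\Delta/\epsilon_1}$ and $F_\nu$ and $f_\nu$ for the CDF and density of $\nu_i=\Lap{c\Delta/\epsilon_2}$, I obtain
\begin{align*}
\Pr{\AA(D)=\vec{a}}  &= f_\nu(v)\int_{-\infty}^{v} g_\rho(z)\,F_\nu(z)^m\,dz, \\
\Pr{\AA(D')=\vec{a}} &= f_\nu(v)\int_{-\infty}^{v} g_\rho(z)\,F_\nu(z-1)^m\,dz.
\end{align*}
Bounding the denominator above via the monotonicity $F_\nu(z-1)\le F_\nu(v-1)$ on $z\le v$ yields $\Pr{\AA(D')=\vec{a}}\le f_\nu(v)F_\nu(v-1)^m$, while restricting the numerator to $z\in[v-\delta,v]$ for any $\delta\in(0,1)$ gives $\Pr{\AA(D)=\vec{a}}\ge f_\nu(v)F_\nu(v-\delta)^m\int_{v-\delta}^{v}g_\rho(z)\,dz$. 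The ratio is therefore at least $C_\delta\left(F_\nu(v-\delta)/F_\nu(v-1)\right)^m$ with $C_\delta>0$, and since the Laplace CDF is strictly increasing the base exceeds $1$. Hence the ratio grows without bound in $m$ and eventually exceeds $e^{\epsilon'}$ for any finite $\epsilon'$.

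The conceptual obstacle, and also the reason why the algorithm genuinely fails to be private, is that the numeric output $v$ leaks the one-sided information $\rho\le v-T$. In the privacy proof of Alg.~\ref{alg:sparseour} the mismatch between $q_i(D)$ and $q_i(D')$ is masked by the change of variables $z\mapsto z+\Delta$ inside an integral over all of $\mathbb{R}$; once the integration range is truncated on one side by the leakage, that translation is no longer available, and the per-query factor that would have been absorbed by the threshold noise instead accumulates multiplicatively across the $m$ negative queries. The only delicate point in executing the plan is choosing $v$ and $\delta$ so that $F_\nu(v-\delta)>F_\nu(v-1)$ and $\int_{v-\delta}^{v}g_\rho(z)\,dz>0$ hold simultaneously; any concrete choice such as $v=\tfrac12$, $\delta=\tfrac12$ suffices, and the rest is bookkeeping.
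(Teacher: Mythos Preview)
Your argument is correct and shares the paper's core idea: the emitted numeric value $v$ forces the truncation $z\le v$, which blocks the change of variables $z\mapsto z+\Delta$ and lets the per-query factor compound over the $m$ negative answers. Your bounding via $F_\nu(z-1)\le F_\nu(v-1)$ on the denominator and restriction to $[v-\delta,v]$ on the numerator is sound, and the concrete choice $v=\tfrac12$, $\delta=\tfrac12$ works.

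The paper's execution is slightly different and cleaner. It sets the emitted value to $0$ (with $q_{m+1}(D)=\Delta$, $q_{m+1}(D')=0$), so the integration range becomes exactly $(-\infty,0]$. On that half-line the Laplace CDF satisfies $F(z)/F(z-\Delta)=e^{\epsilon/2}$ identically, so the ratio of the two integrals collapses to the closed form $e^{(m-1)\epsilon/2}$ with no estimation needed. Your approach trades this exact computation for a more flexible bounding argument; both reach the same conclusion, but the paper's choice of output value $0$ buys an explicit growth rate rather than just ``some base $>1$ to the $m$.''
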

\begin{proof}
Set $c=1$ for simplicity.  
Given any finite $\epsilon^{\prime}>0$, we construct an example to show that Alg.~\ref{alg:sparselecture} is not $\epsilon^{\prime}$-DP.  
Consider an example with $T=0$, and $m+1$ queries $\mathbf{q}$ with sensitivity $\Delta$ such that  $\mathbf{q}(D)=0^m\Delta$ and $\mathbf{q}(D^{\prime})=\Delta^m 0$,
and the output vector $\vec{a}=\nega^m 0$, that is, only the last query answer is a numeric value $0$.  Let $\AA$ be Alg.~\ref{alg:sparselecture}.  We show that  $\frac{\Pr{\AA(D)=\scriptsize{\vec{a}}}}{\Pr{\AA(D^{\prime})=\scriptsize{\vec{a}}}} \ge e^{\epsilon^{\prime}} $ for any $ \epsilon^{\prime}>0$ when $m$  is large enough.

We denote the cumulative distribution function of $\Lap{\frac{2\Delta}{\epsilon}}$ by $F(x)$. We have
{\small
\begin{align}
 \nonumber
 & \Pr{\AA(D)=\vec{a}}  \\
 \nonumber
 =& \int_{-\infty}^{\infty} \!\Pr{\rho\!=\!z} f_{D}(z)\Pr{\Delta\splus \nu_{m+1}\!\ge\!  z  \wedge \Delta\splus \nu_{m+1} \!=\!0}  \: dz  \\
 \nonumber
 =& \int_{-\infty}^{\infty}\Pr{\rho\!=\!z}f_{D}(z) \Pr{0\ge z} \Pr{\nu_{m+1} = -\Delta}  \: dz \\
 \nonumber
 =& \frac{\epsilon}{4\Delta}e^{-\frac{\epsilon}{2}} \int_{-\infty}^{\infty}\Pr{\rho\!=\!z}f_{D}(z) \Pr{0\ge z}  \: dz \\
 \nonumber
 = & \frac{\epsilon}{4\Delta}e^{-\frac{\epsilon}{2}} \int_{-\infty}^{0} \Pr{\rho\!=\!z}f_{D}(z) \: dz    \\
 \nonumber
= & \frac{\epsilon}{4\Delta} e^{-\frac{\epsilon}{2}}\int_{-\infty}^0  \Pr{\rho=z} \prod\limits_{i=1}^{m} \Pr{\nu_i< z}  \: dz\\
 \label{eq:three_proof1}
 = &  \frac{\epsilon}{4\Delta} e^{-\frac{\epsilon}{2}}\int_{-\infty}^0  \Pr{\rho=z}(F(z))^m \: dz, \\
 \nonumber
 & \mbox{  and similarly} \\
 \label{eq:three_proof2}
 & \Pr{\AA(D^{\prime})=\vec{a}} = \frac{\epsilon}{4\Delta} \int_{-\infty}^0 \Pr{\rho=z^{\prime}} (F(z^{\prime}-\Delta))^{m}  \: dz^{\prime}.
\end{align}
}

The fact that $0$ is given as an output reveals the information that the noisy threshold is at most $0$, forcing the range of integration to be from $-\infty$ to $0$, instead of from $-\infty$ to $\infty$.  This prevents the use of changing $z$ in (\ref{eq:three_proof1}) to $z^{\prime}-\Delta$ to bound the ratio of (\ref{eq:three_proof1}) to (\ref{eq:three_proof2}).

Noting that $\frac{ F(z)}{F(z-\Delta)}=e^{\frac{\epsilon}{2}}$ for any $z \le 0$, we thus have
{\small
\begin{align*}
 \frac{\Pr{\AA(D)=\vec{a}}}{\Pr{\AA(D^{\prime})=\vec{a}}} &= e^{-\frac{\epsilon}{2}}\frac{ \int_{-\infty}^0  \Pr{\rho=z}(F(z))^m \: dz }{ \int_{-\infty}^0 \Pr{\rho=z^{\prime}} (F(z^{\prime}-\Delta))^{m}  \: dz^{\prime}} \\
&=e^{-\frac{\epsilon}{2}}\frac{ \int_{-\infty}^0  \Pr{\rho=z}( e^{\frac{\epsilon}{2}}F(z-\Delta))^m \: dz }{ \int_{-\infty}^0 \Pr{\rho=z^{\prime}} (F(z^{\prime}-\Delta))^{m}  \: dz^{\prime}} \\
&=e^{(m-1)\frac{\epsilon}{2}},
\end{align*}
}
and thus when $m>\lceil \frac{2\epsilon^{\prime}}{\epsilon}\rceil+1$, we have  $\frac{\Pr{\AA(D)=\scriptsize{\vec{a}}}}{\Pr{\AA(D^{\prime})=\scriptsize{\vec{a}}}} > e^{\epsilon^{\prime}}$.
\end{proof}

\subsection{Proof that Alg.~\ref{alg:CXZX} is non-private}

\begin{theorem}
\label{prop:Alg6infinite}
Alg.~\ref{alg:CXZX} is not $\epsilon^{\prime}$-DP for any finite $\epsilon^{\prime}$.
\end{theorem}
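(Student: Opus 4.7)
The plan is to exhibit an explicit counterexample: two neighboring datasets $D, D^{\prime}$, a specific sequence of queries, and a specific output vector, for which the probability ratio under Alg.~\ref{alg:CXZX} grows unboundedly with the number of queries. Following the template used for Theorems~\ref{prop:Rothinfinite} and~\ref{prop:Alg5infinite}, I would take $T=0$, $\Delta$ arbitrary (say $\Delta=2$ for clean constants), and consider $2m$ queries $q_1,\dots,q_{2m}$ such that $\mathbf{q}(D)=(-1,\dots,-1,1,\dots,1)$ and $\mathbf{q}(D^{\prime})=(1,\dots,1,-1,\dots,-1)$, each block of size $m$; these are indeed neighboring if realized by counting queries. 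The target output is $\vec{a}=\nega^{m}\posi^{m}$. The reason this is the right choice is that it exercises the key defect of Alg.~\ref{alg:CXZX}: the per-query noise $\nu_i=\Lap{\Delta/\epsilon_2}$ does not scale with the total number of positive outputs, and there is no abort after $c$ positives, so accumulating $m$ positives and $m$ negatives must leak the threshold noise.

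Next I would write out, using Eq.~(\ref{eq:prgen}),
\begin{align*}
\Pr{\AA(D)=\vec{a}} &= \int_{-\infty}^{\infty}\Pr{\rho=z}\,F(z+1)^{m}\,\bigl(1-F(z-1)\bigr)^{m}\,dz,\\
\Pr{\AA(D^{\prime})=\vec{a}} &= \int_{-\infty}^{\infty}\Pr{\rho=z}\,F(z-1)^{m}\,\bigl(1-F(z+1)\bigr)^{m}\,dz,
\end{align*}
where $F$ is the CDF of $\Lap{\Delta/\epsilon_2}$. Using the symmetry $F(-x)=1-F(x)$ of the Laplace distribution together with the substitution $z\mapsto -z$, I would fold the $D$-integral to $\int_{0}^{\infty}$ and lower-bound it by restricting to $z\in[0,1]$, where both $F(z+1)$ and $1-F(z-1)$ are bounded below by $1/2$. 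This gives a lower bound of order $1/4^{m}$ up to a constant independent of $m$.

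The second main step is to upper-bound $\Pr{\AA(D^{\prime})=\vec{a}}$ by splitting its integration range into $[0,1]$ and $[1,\infty)$ (the symmetric negative half is handled identically). On $[0,1]$, $F(z-1)(1-F(z+1))=\tfrac{1}{4}e^{-\epsilon_{2}\cdot 2\Delta/(2\Delta)}$-type terms, giving a factor $e^{-m\epsilon_{2}}$ gain compared with the $D$-integrand on the same interval; on $[1,\infty)$ I would use the AM--GM bound $F(z-1)\le\tfrac{1}{2}e^{\epsilon_{2}(z-1)/\Delta}$ together with the closed-form $1-F(z+1)=\tfrac{1}{2}e^{-\epsilon_{2}(z+1)/\Delta}$, yielding another factor of at most $e^{-m\epsilon_{2}}$ times an integrable tail in $z$. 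Combining the two pieces,
\begin{equation*}
\frac{\Pr{\AA(D)=\vec{a}}}{\Pr{\AA(D^{\prime})=\vec{a}}}\;\ge\;\Omega\!\left(e^{m\epsilon_{2}/2}\right),
\end{equation*}
which exceeds $e^{\epsilon^{\prime}}$ for any fixed $\epsilon^{\prime}$ once $m$ is taken large enough.

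The hard part of this plan is the upper bound on $\Pr{\AA(D^{\prime})=\vec{a}}$: the integrand is not uniformly small in $z$, so one cannot just use a single crude inequality. The argument has to be split at the transitions of $F$ (namely $z=\pm 1$) and one has to check that the tail contribution from $|z|\ge 1$ does not dominate the main contribution from $|z|\le 1$. A subtle point is that the symmetric swap between $D$ and $D^{\prime}$ makes it impossible to close the ratio by a variable substitution $z\mapsto z\pm\Delta$ (which is exactly the device used in the proof of Theorem~\ref{thm:algsparseour}); this is precisely why the per-query noise must scale with $c$ and the algorithm must abort. Once the three regions are bounded, the conclusion follows by taking $m$ sufficiently large.
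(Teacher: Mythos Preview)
Your construction is correct and will go through, but the paper's proof is considerably shorter because it chooses the datasets differently and then bounds the \emph{ratio of the integrands pointwise} rather than lower-bounding the numerator and upper-bounding the denominator separately.

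Specifically, the paper takes $\Delta=1$, $T=0$, $\mathbf{q}(D)=0^{2m}$, $\mathbf{q}(D')=1^m(-1)^m$, and $\vec{a}=\nega^m\posi^m$. This yields integrands $\bigl(F(z)(1-F(z))\bigr)^m$ and $\bigl(F(z-1)(1-F(z+1))\bigr)^m$. The paper then observes that for every $z$,
\[
\frac{F(z)\,(1-F(z))}{F(z-1)\,(1-F(z+1))}\;\ge\;e^{\epsilon/2},
\]
because for $z\le 0$ one has $F(z)/F(z-1)=e^{\epsilon/2}$ and $(1-F(z))/(1-F(z+1))\ge 1$, while for $z>0$ the roles swap. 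Hence the ratio of the integrals is at least $e^{m\epsilon/2}$, with no need to restrict the range of integration, use AM--GM, or worry about tail contributions.

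Your symmetric choice $\mathbf{q}(D)=(-1)^m1^m$, $\mathbf{q}(D')=1^m(-1)^m$ works too, but the concern you flag as the ``hard part'' is not actually hard: in fact $F(z-1)\bigl(1-F(z+1)\bigr)\le\tfrac14 e^{-2\epsilon_2/\Delta}$ holds for \emph{all} $z$ (your AM--GM step gives exactly this on the tails, and equality holds on $[-1,1]$), so a single crude inequality does suffice. Even more directly, the pointwise ratio $\frac{F(z+1)(1-F(z-1))}{F(z-1)(1-F(z+1))}\ge e^{2\epsilon_2/\Delta}$ holds for every $z$ with your setup as well, so the region-splitting and separate upper/lower bounds can be dispensed with entirely. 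What your route buys is nothing extra; what the paper's route buys is a two-line argument.
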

 \begin{proof}
We construct a counterexample with $\Delta=1$, $T=0$, and $2m$ queries such that $\mathbf{q}(D)=0^{2m}$, and   $\mathbf{q}(D^{\prime})=1^m(-1)^m$. Consider the output vector $\vec{a}=\nega^m\posi^m$. Denote the cumulative distribution function of $\nu_i$ by $F(x)$. 
From Eq.~(\ref{eq:prgen}), we have
{\small
\begin{align*}
& \Pr{\AA(D)=\vec{a}}  \\
&=  \int_{-\infty}^{\infty}  \Pr{\rho=z}\prod\limits_{i=1}^{m} \Pr{0\splus \nu_i< z} \prod\limits_{i=m+1}^{2m} \Pr{0\splus \nu_i\ge z}\: dz\\
&= \int_{-\infty}^{\infty} \Pr{\rho=z} ( F(z) (1-F(z)) ) ^m \: dz,
\end{align*}
}
and
{\small
\begin{align*}
& \Pr{\AA(D')=\vec{a}}  \\
&=  \int_{-\infty}^{\infty}  \Pr{\rho=z}\prod\limits_{i=1}^{m} \Pr{1\splus \nu_i< z} \prod\limits_{i=m+1}^{2m} \Pr{-1\splus \nu_i\ge z}\: dz\\
&= \int_{-\infty}^{\infty} \Pr{\rho=z} ( F(z-1) (1-F(z\splus 1)) ) ^m \: dz.
\end{align*}
}
We now show that  $\frac{\Pr{\AA(D)=\scriptsize{\vec{a}}} }{\Pr{\AA(D')=\scriptsize{\vec{a}}} }$  is unbounded as $m$ increases, proving this theorem. Compare  $F(z) (1-F(z))$ with $F(z-1) (1-F(z\splus 1))$. Note that $F(z)$ is  monotonically  increasing. When $  z \le 0$,

\vspace*{-0.2in}{\small
\begin{equation*}
\frac{F(z) (1-F(z)) }{F(z-1) (1-F(z\splus 1)) } \ge \frac{F(z) }{F(z-1)}
=\frac{\frac{1}{2}e^{\frac{\epsilon}{2}z} }{\frac{1}{2}e^{\frac{\epsilon}{2}(z-1)}} =e^{\frac{\epsilon}{2}}  .
\end{equation*}
}
When $z >0$, we also have
{\small
\begin{equation*}
\frac{F(z) (1-F(z)) }{F(z-1) (1-F(z\splus 1)) } \ge \frac{1-F(z) }{1-F(z+1)}
=\frac{\frac{1}{2}e^{-\frac{\epsilon}{2}z} }{\frac{1}{2}e^{-\frac{\epsilon}{2}(z+1)}} =e^{\frac{\epsilon}{2}}  .
\end{equation*}
}
So,  $\frac{\Pr{\AA(D)=\scriptsize{\vec{a}}} }{\Pr{\AA(D')=\scriptsize{\vec{a}}} } \ge e^{\frac{m\epsilon}{2}}$, which is greater than $e^{\epsilon'}$ when $m > \lceil \frac{2\epsilon'}{\epsilon}\rceil$ for any finite $\epsilon'$.
\end{proof}

\subsection{Error of non-privacy proof in~\protect\cite{CM15}}\label{app:gptt}

The proof in~\cite{CM15} that GPTT is non-private considers the counter-example  with $\Delta=1$, $T=0$,  a sequence $\mathbf{q}$ of $2t$ queries such that $\mathbf{q}(D)=0^t1^t$ and $\mathbf{q}(D')=1^t0^t$, and the output vector $\vec{a}=\nega^t\posi^t$.  Then
{\scriptsize
\begin{align*}
\frac{\Pr{\GPTT(D) =\vec{a}}}{\Pr{\GPTT(D') =\vec{a}}}
&= \frac{\int_{
\sminus\infty}^{\infty}  \Pr{\rho=z} \left(F_{\epsilon_2}(z)\sminus F_{\epsilon_2}(z)F_{\epsilon_2}(z\sminus 1)\right)^t dz}{\int_{-\infty}^{\infty}  \Pr{\rho=z} \left(F_{\epsilon_2}(z\sminus1)\sminus F_{\epsilon_2}(z)F_{\epsilon_2}(z\sminus 1)\right)^t dz}
       \\
{\small \mbox{where}\;\; F_{\epsilon}(x) }& {\small \mbox{  is the cumulative distribution function of } \Lap{1/\epsilon}.}
\end{align*}
}
The goal of the proof is to show that the above is unbounded as $t$ increases.  A key observation is that the ratio of
the integrands of the two integrals is always larger than $1$, i.e.,
\begin{equation*}
\kappa(z)=\frac{F_{\epsilon_2}(z)-F_{\epsilon_2}(z)F_{\epsilon_2}(z-1)}{F_{\epsilon_2}(z-1)-F_{\epsilon_2}(z)F_{\epsilon_2}(z-1)} > 1
\end{equation*}
For example, since $F_{\epsilon}(x)$ is the cumulative distribution function of $\Lap{1/\epsilon}$, we have $F_{\epsilon_2}(0)=1/2$ and $F_{\epsilon_2}(-1)<1/2$; and thus $\kappa(0)=\frac{1-F_{\epsilon_2}(-1)}{F_{\epsilon_2}(-1)}>1$.
However, when $|z|$ goes to $\infty$, $\kappa(z)$ goes to $1$.
Thus the proof tries to limit the integrals to be a finite interval so that there is a lower-bound for $\kappa(z)$ that is greater than $1$.  It denotes $\alpha=\Pr{\GPTT(D') =\vec{a}}$. Then choose parameter $\delta=|F_{\epsilon_1}^{-1}(\frac{\alpha}{4})|$ to use $[-\delta,\delta]$ as the finite interval, and thus
\begin{align*}
\alpha \leq 2 \int_{-\delta}^{\delta}  \Pr{\rho=z} \left(F_{\epsilon_2}(z-1)-F_{\epsilon_2}(z)F_{\epsilon_2}(z-1)\right)^t dz.
\end{align*}

Denote the minimum of $\kappa(z)$ in the closed interval $[-\delta, \delta]$ by $\kappa$. Then we have
$\frac{\Pr{\GPTT(D)=\vec{a}}}{\Pr{\GPTT(D')=\vec{a}}}> \frac{\kappa^t}{2} $.
The proof claims that  for any $\epsilon'>1$ there exists a $t$ to make the above ratio larger than $e^{\epsilon'}$.

The proof is incorrect because of dependency in the parameters.  First, $\alpha$ is a function of $t$; and when $t$ increases, $\alpha$ decreases because the integrand above is positive and decreasing.  Second, $\delta$ depends on $\alpha$, and when $\alpha$ decreases, $\delta$ increases.  Thus when $t$ increases, $\delta$ increases. We write $\delta$ as $\delta(t)$ to make the dependency on $t$ explicit.  Third, $\kappa$, the minimum value of $\kappa(z)$ over the interval $[-\delta(t),\delta(t)]$, decreases when $t$ increases. That is, $\kappa$ is also dependent on $t$, denoted by $\kappa(t)$, and  decreases while $t$ increases. It is not sure that there exists such a $t$  that $\frac{\kappa(t)^t}{2}> e^{\epsilon'}$ for any $\epsilon'>1$.

To demonstrate that the error in the proof cannot be easily fixed, we point out that following the logic of that proof, one can prove that Alg.~\ref{alg:sparseour} is not $\epsilon'$-DP for any finite $\epsilon'$.  We now show such a ``proof'' that contradicts Lemma~\ref{lemma:negative}.  Let $\AA$ be Alg.~\ref{alg:sparseour}, with $c=1$.  Consider an example with $\Delta=1$, $T=0$, a sequence $\mathbf{q}$ of $t$ queries such that $\mathbf{q}(D)=0^t$ and $\mathbf{q}(D')=1^t$, and output vector $\vec{a}=\nega^t$.  Let
{\small
\begin{align*}
& \beta = \Pr{\AA(D) =\nega^\ell}   =  \int_{-\infty}^{\infty}  \Pr{\rho=z} \left(F_{\frac{\epsilon}{4}}(z)\right)^t dz
 \\
& \alpha  =\Pr{\AA(D') =\nega^\ell}   =  \int_{-\infty}^{\infty}  \Pr{\rho=z} \left(F_{\frac{\epsilon}{4}}(z-1)\right)^t dz,
 \\
&\quad \mbox {where }\;\;  F_{\frac{\epsilon}{4}}(x) \mbox{  is the cumulative distribution function of } \Lap{\frac{4}{\epsilon}}.
\end{align*}
}
Find a parameter $\delta$ such that $\int_{-\delta}^{\delta} \Pr{\rho=z}\: dz\geq 1-\frac{\alpha}{2}$. Then $\int_{-\delta}^{\delta}  \Pr{\rho=z} \left(F_{\frac{\epsilon}{4}}(z-1)\right)^t dz \geq \frac{\alpha}{2}$.  Let $\kappa$ be the minimum value of $\frac{F_{\frac{\epsilon}{4}}(z)}{F_{\frac{\epsilon}{4}}(z-1)}$ in $[-\delta,\delta]$; it must be that $\kappa>1$.  Then
{\scriptsize
\begin{align*}
\beta & > \int_{-\delta}^{\delta}  \Pr{\rho=z} \left(F_{\frac{\epsilon}{4}}(z)\right)^t dz \geq \int_{-\delta}^{\delta}  \Pr{\rho=z} \left(\kappa F_{\frac{\epsilon}{4}}(z-1)\right)^t dz \\
 & = \kappa^t \int_{-\delta}^{\delta}  \Pr{\rho=z} \left(F_{\frac{\epsilon}{4}}(z-1)\right)^t dz \geq \frac{\kappa^t}{2}\alpha.
\end{align*}
}
Since $\kappa>1$, one can choose a large enough $t$ to make $\frac{\beta}{\alpha}=\frac{\kappa^t}{2}$ to be as large as needed.   We note that this contradicts Lemma~\ref{lemma:negative}. The contradiction shows that the proof logic used in~\cite{CM15} is incorrect.


\end{document}